\newtheorem{proposition}{{\textbf{Proposition}}}
\newtheorem{lemma}{{ \textbf{Lemma}}}
\newtheorem{definition}{{ \textbf{Definition}}}
\begin{document}
%
\title{Truthful Spectrum Auction for Efficient Anti-Jamming in Cognitive Radio Networks}
%
%
%

\author{Mohammad~Aghababaie~Alavijeh,
        Behrouz~Maham,~\IEEEmembership{Senior~Member,~IEEE,}
        Zhu~Han,~\IEEEmembership{Fellow,~IEEE,}
        and~Walid~Saad,~\IEEEmembership{Senior~Member,~IEEE}
\thanks{Preliminary version of a portion of this work is appeared in
\emph{Proc. IEEE International Conference on Communications
(ICC'13)}.} \thanks{ Mohammad Aghababaie Alavijeh
is with the School of ECE, College of Engineering, University of
Tehran, Iran. Email: aghababaie@ut.ac.ir. Behrouz Maham is with the
School of Engineering, Nazarbayev University, Astana, Kazakhstan.
Email: behrouz.maham@nu.edu.kz. Zhu Han is with the Department of
ECE, University of Houston, Houston, TX 77004, USA. E-mail:
zhan2@uh.edu. Walid Saad is with the Department of ECE, Virginia
Tech, Blacksburg, VA 24060, USA. E-mail: walids@vt.edu.
}
}
\maketitle

\begin{abstract}
One significant challenge in cognitive radio networks is to design a
framework in which the selfish secondary users are obliged to
interact with each other truthfully. Moreover, due to the
vulnerability of these networks against jamming attacks, designing
anti-jamming defense mechanisms is equally important.
In this paper, we propose a truthful mechanism, robust against the
jamming, for a dynamic stochastic cognitive radio network consisting
of several selfish secondary users and a malicious user. In this
model, each secondary user participates in an auction and wish to
use the unjammed spectrum, and the malicious user aims at jamming a
channel by corrupting the communication link. A truthful auction
mechanism is designed among the secondary users. Furthermore, a
zero-sum game is formulated between the set of secondary users and
the malicious user. This joint problem is then cast as a randomized
two-level auctions in which the first auction allocates the vacant
channels, and then the second one assigns the remaining unallocated
channels. We have also changed this solution to a trustful
distributed scheme. Simulation results show that the distributed
algorithm can achieve a performance that is close to the centralized
algorithm, without the added overhead and complexity.
\end{abstract}

\begin{IEEEkeywords}
Cognitive~Radio~Network, Zero~Sum~Game, Auction, Learning,
Anti-Jamming~Scheme.
\end{IEEEkeywords}

%
\IEEEpeerreviewmaketitle

\section{Introduction}
Spectrum scarcity has been a major problem for the existing wireless
networks which motivated researchers to investigate new intelligent
paradigm to manage available spectrum. Cognitive radio (CR) has thus
emerged as a promising approach to improve spectral efficiency in
wireless networks. In CR networks, secondary users (SUs) may
cognitively access unused spectrum that is not currently occupied by
licensed users, namely primary users (PUs) under the condition that
the PUs' transmission will not be interfered~\cite{QZhao}.

Spectrum management in CR networks has been considered in many
recent works such as ~\cite{Akyildiz} and \cite{Management2013} (and
references therein). One important technique that enables
CR-oriented spectrum allocation is to consider spectrum auction
among SUs that seek to idle channels~\cite{Designing Truthful
Spectrum Auctions}. Auction theory, which is rooted in economics,
offers a promising solution for intelligently allocating resources,
such as power and spectrum, in CR networks. There are different
approaches for implementing auction theory in wireless networks,
which have been investigated in~\cite{Niyato}. In general, in such
scenarios, users are rational and have their own strategies in order
to get more resources. Extensive existing works are available on
different auction approaches for spectrum allocation (e.g.,
see~\cite{Maharjan}). For instance, the authors in~\cite{JSAC2013}
find the maximization of the PUs' expected profit by proposing the
leasing based spectrum allocation for SUs. In addition, the first
price auction to optimize both the total payoff of SUs and revenue
of auctioneer is studied in~\cite{UncertainInfo}. One drawback of
the suggested scheme is that SUs might reveal wrong to further
improve their utilities. The work in~\cite{DySpan} provides a
spectrum allocation based upon a double-sided auction mechanism. In
this scheme, an untruthful behavior also brings suboptimal
solutions.

Competition among the selfish SUs is crucial to use rare resources
in the spectrum market framework~\cite{han}. More importantly,
non-cooperative users have intentions to cheat so as to gain more
benefits. The Vickrey Clarke Groves (VCG) auction mechanism is
commonly used in the auction games in order to provide not only the
assurance of truthfulness but also the maximization of the social
welfare~\cite{Vickrey}. For example, the authors
in~\cite{incentive1} and \cite{incentive2} proposed the incentive
mechanism to encourage users to contribute truthfully their
resources by forming coalitions. Moreover, because of selfishness of
SUs, each user attending in the auction has incomplete information
about the other users. Hence, selecting a proper learning task is a
big challenge for designing the distributed game. A Bayesian
nonparametric belief update scheme is suggested to solve this issue
in~\cite{Truthful}.

In CR networks, SUs are susceptible to several malicious attacks.
Several anti-attack mechanisms have been proposed in existing
literature~\cite{IUT}. For example, the problem of PU emulation
attack on CR networks has been investigated in~\cite{Reed} in which
a malicious user can send signals with the same PU transmission
characteristics in order to mislead the SUs. Instead, SUs can
recognize PUs' transmission by adapting a favorable verification
protocol. In addition, a game-theoretic approach based upon the
concept of secrecy capacity is proposed to model eavesdropping
attacks on CR networks in~\cite{Han2}. In~\cite{anti-jamming
stochastic}, a set of SUs is available in a stochastic medium and
they select randomized channel hopping as the defensive strategy.
This framework falls into the category of the zero sum stochastic
game and the authors propose a minimax-Q learning to find the
related solution. Besides, the randomized defense strategy for
channel hopping and power allocation with learning algorithms is
suggested in~\cite{Anti-Jamming Games}. However, in a spectrum
auction, users act selfishly and these defense strategies are not
fully applicable.

The main contribution of this paper is to jointly consider truthful
spectrum auction and the presence of a jamming attack. In this
scenario, two types of users exist: selfish SUs participating the
auction and a malicious jamming user that wishes to reduce the
social welfare as much as possible. Our key contributions can
therefore be summarized as follows:
\begin{itemize}
  \item To model the mentioned scenario, we formulated two inter-related games: a zero-sum stochastic
game between the CR network and the jammer, and an associated
mechanism design among the SUs at each stage of the game. Indeed,
the zero sum game exists between the CR network and the malicious
user, while mechanism design is considered among the SUs. Using our
proposed framework, the SUs do not show their selfishness and at the
same time cooperate with each other to get higher profits against
the malicious user.
  \item In order to realize the joint games, we propose an algorithm based on zero-sum game which can extensively reduce the complexity of
  solving the game with an asymmetric number of actions for the players.
  The proposition is a basis for the work because the malicious user and the SUs
  are unequal in the number of actions.
  \item Using 
    the derived proposition, we show that the zero-sum stochastic game and
    spectrum auction game can be converted to a centralized two-level
    spectrum auction in which SUs send their bids to a coordinator and the
    coordinator confronts against the malicious user. More specifically,
    the coordinator initially allocates spectrum to the first level
    bids, and then the remaining spectrum is allocated by the second auction.
    Indeed, the main idea of the centralized two-level auction is inspired from the randomized auction which is common in combinatorial
    auction theory such as~\cite{randomized auction 1} and \cite{randomized auction 2}. However,
    our considered scenario significantly differs from those existing works.
  \item A decentralized method based upon the centralized two-level
    auction is examined. The proposed algorithm use the proven interesting properties
    of the centralized game which extremely reduces the complexity of the game.
    Simulation results show that the loss in performance for the
    decentralized method in comparison with the centralized one is
    negligible.

  \item Due to the fact that SUs have no knowledge about the states of other SUs and
  jammer, the parameters for the decentralized scheme must be learnt
  from a proper scheme like the one proposed in~\cite{Basar}. We propose a Boltzmann-Gibbs algorithm to estimate the
  unknown parameters for each users. Simulation results show that this method
  yields considerable performance gains. Moreover, the convergence of the
  proposed decentralized game can be controlled by learning parameters.
\end{itemize}

The rest of this paper is organized as follows. The system model is
presented in Section II. In Section III, a centralized algorithm
based on a two-level auction is described. In Section IV, we propose
a truthful decentralized method in accordance with the proposed
centralized auction. The simulation results are given in Section V.
Finally, in Section VI, we conclude the paper.

\section{System Model and Protocol Description}
We consider a CR network consisting of ${M}$ channels having a
slotted-time structure indexed by ${j\in\{1,2,\ldots,M\}}$. Moreover, 
the duration of each time-slot is assumed to be ${T_s}$. There are
$N \ge M$ SUs that seek to access 
the vacant channels to send their data. Moreover, these users are
selfish and non-cooperative. The primary network consists of a
number of PUs who have a have priority to use the channels in a
slotted-time manner. We consider an on-off scheme to model the
channel usage, in which ${y_j} (t)=1$ and ${y_j}(t)=0$ indicate that
channel ${j}$ is idle and busy at time $t$,
respectively~\cite{anti-jamming stochastic} and \cite{Anti-Jamming
Games}. The transition probabilities from on-to-off and off-to-on
are ${\alpha}_{N2F, j}$ and ${\alpha}_{F2N, j}$, respectively.
Without loss of generality, we assume that every SU can only use one
channel at time $t$~\cite{van der Schaar}. In order
to avoid the conflict with the 
PUs transmission, each SU knows the
availability of all the channels before transmitting. 
This can be done by using wideband sensing or cooperative sensing
techniques~\cite{CR_architecture}.

\begin{figure}[t]
  \begin{center}
    \includegraphics[height=4 in, width=5 in]{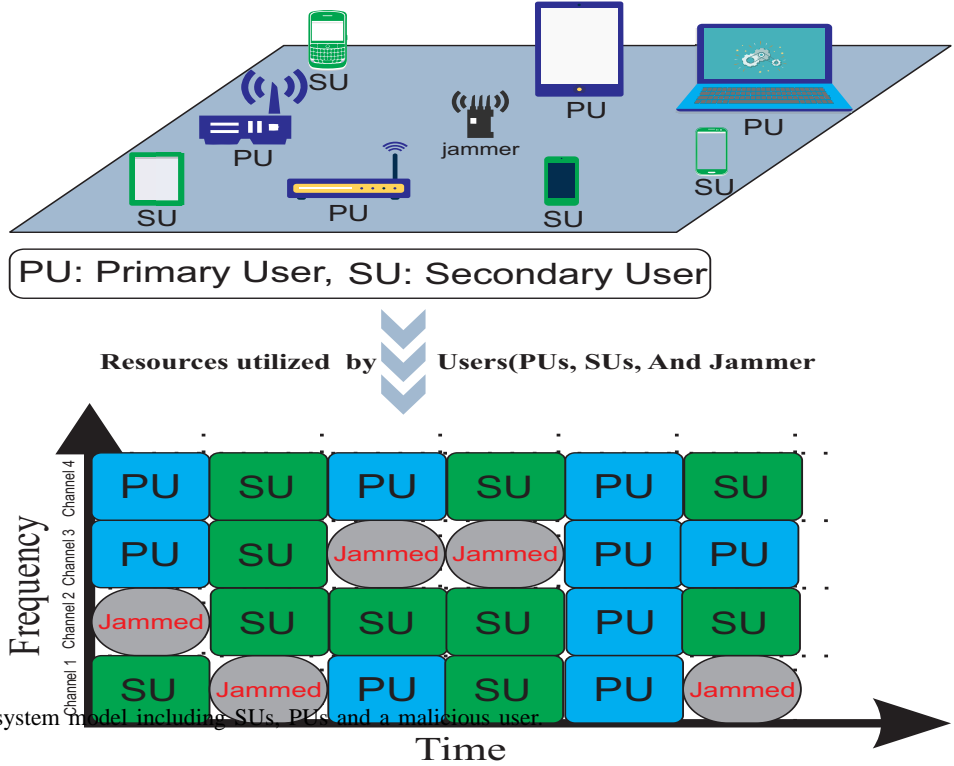} 
    \begin{center}\vspace{-2 cm}
    \caption{The system model including SUs, PUs and a malicious user.} \vspace{-2 cm}
    \end{center}
    \label{SystemModel}
  \end{center}
\end{figure}
The state of channel $j$ 
for SU $i$ is assumed to be the
received signal-to-noise-ratio (SNR) $\gamma_{ij}(t)$, following
an exponential distribution with mean of
$\overline{\gamma_{ij}}$. Similar to~\cite{Finite-state}, we
represent $\gamma_{ij}(t)$ by discrete states to attain a finite
Markov chain. In addition, let $b_i^t$ indicate the buffer state of
user $i$ at time $t$ and $b_i^t\in\{0,1,\ldots,B_{\max}\}$ where
$B_{\max}$ is the maximum buffer size. Thus, the state of SU $i$ at
time $t$ is
$\boldsymbol{s}_i(t)=\big(\gamma_{i1}(t),\gamma_{i2}(t),\ldots,\gamma_{iM}(t),b_i^t\big)$
and the state of the stochastic game is described as follows:
\begin{equation}
\boldsymbol{S}(t)=\big(y_1(t),\ldots,y_M (t),\boldsymbol{s}_1
(t),\ldots,\boldsymbol{s}_N (t)\big),
\end{equation}
where the state of the game $\boldsymbol{S}(t)$ consists of the
state of each SU and the occupancy state of each channel. The
assigned channel to the $i$-th SU is denoted by $A_i (t)$. Moreover,
it is possible that no channel is assigned to the 
SU, i.e., $A_i(t)=0$. Thus, we have $A_i(t)\in\{0,1,\ldots,M\}$.

Assume there is a malicious attacker in this scenario which attempts
to interrupt the communication links of 
the SUs by inserting interference. The action of malicious user is
to jam $L$ channels chosen from the vacant channels. Indeed, if the
malicious user jams channel $j$, the communication link is assumed
to be disrupted at that time. We assume that the jammer knows the
channel occupancy states at each stage time. For simplicity, we
assume $L=1$, and our approach can be extended to $L>1$ case. The
action of jammer, $A_0(t)\in\{1,2,\ldots,M\}$, indicates the jammed
channel by the attacker. Fig.~1 shows the proposed system model and
illustrates how users occupy the time-frequency resources.

Notice that the availabilities of the channels are only imposed by
PUs, and hence, they are independent of the attacker's action and
SUs' actions. Consequently, we can now derive the transition
probability of the states as
\begin{eqnarray}
P\big(\boldsymbol{S}(t+1)\mid
\boldsymbol{S}(t),A_0(t),A_1(t),\ldots,A_N
(t)\big)=\qquad\qquad\qquad\qquad\qquad\\
P\big(y_1(t+1),\ldots,y_M(t+1)\mid y_1(t),\ldots,y_M(t)\big)
\prod_{i=1}^N{P\big(\boldsymbol{s}_i(t+1)\mid
\boldsymbol{s}_i(t),A_0(t),\ldots,A_N(t)\big)},\nonumber
\end{eqnarray}
$\boldsymbol{s}_i(t+1)$ includes information about the channels'
conditions and the buffer state. The channel conditions do not
depend on the SUs action. Besides, the buffer state, $b_i(t+1)$, is
affected by the jammer action, $A_0(t)$, the action of SU $i$,
$A_i(t)$, and $\boldsymbol{s}_i(t)$. Hence, we can express the last
term of (2) as
\begin{eqnarray}
P\big(\boldsymbol{s}_i(t+1)\mid \boldsymbol{s}_i(t),A_0(t),A_1(t),\ldots,A_N(t)\big)=P\big(\boldsymbol{s}_i(t+1)\mid \boldsymbol{s}_i(t),A_0(t),A_i(t)\big)\nonumber\\
= P\big(b_i(t+1)\mid b_i(t),A_0(t),A_i(t)\big)
\times\prod_{i=1}^M{P\big(\gamma_{ij} (t+1)\mid
\gamma_{ij}(t)\big).}\qquad\qquad\qquad\,\,\,\,
\end{eqnarray}

We denote the incoming traffic of SU $i$ at time $t$ as $f_i^t$
where $f_i^t\in\{0,1,\ldots,\infty\}$. It is assumed that $f_i^t$
has the Poisson distribution with the average
$\overline{f_i}$~\cite{van der Schaar}. Moreover, the buffer state
is derived from $b_i(t+1)=\min\big((b_i(t)- g_{A_i, A_0}(t) )^+
+f_i^t,B_{\max}\big)$. Hence, we have the following expression for its
transition probability
\begin{eqnarray}
P\big(b_i(t+1)\mid
b_i(t),A_0(t),A_i{(t)}\big)=\qquad\qquad\qquad\qquad
\end{eqnarray}
\vspace{-.75 cm}
\begin{eqnarray*}
 \left\{
\begin{array}{ll}\frac{\overline{(f_i)^x} e^{\overline{f_i}}}{x!}, &\textrm{$0\leq x<-(b_i(t)-g_{A_i, A_0}(t))^+ +B_{\max}$},\\
\sum_{x=B}^\infty {\frac{\overline{(f_i )^x} e^{\overline{f_i}}}{x!}}, &\textrm{$x=-\big(b_i(t)-g_{A_i, A_0}(t)\big)^+ +B_{\max}$},\\
\end{array}\right.\nonumber
\end{eqnarray*}
where $(c)^+=\max(c,0)$ and $g_{A_i, A_0}(t)$  indicates the
transmission bit rate if channel $A_i(t)$ is selected and channel
$A_0(t)$ is jammed. Therefore, $g_{A_i, A_0}(t)$ can be calculated
as \cite{gol97_2}
\begin{eqnarray}
g_{A_i, A_0}(t)= \bigg\lfloor{T_sW \log_2\big(1+\frac {1.5
\gamma_{i,j}}{\ln(\frac {0.2}{\text{BER}_{\text{tar}}})}
\big)}\bigg\rfloor I(A_i\neq A_0),
\end{eqnarray}
where $T_s$, $W$ and $\text{BER}_{\text{tar}}$ are the time
duration, bandwidth of each channel and target bit error rate,
respectively. In (5), $\lfloor{X}\rfloor$ and $I(Y)$ indicate the
largest integer number which is lower than $X$ and the sign of $Y$,
respectively. When the $i$-th SU selects channel $A_i(t)$ and the
jammer selects the $A_0(t)$-th channel at the same time, the utility
function of user $i$ at time $t$ is characterized as follows
\begin{eqnarray}
r_i\big(\boldsymbol{S}(t),A_i(t),A_0(t)\big)= -\Big(b_i(t)-g_{A_i,
A_0}(t)-B_{\max}+f_i^t\Big)^+.
\end{eqnarray}

In our scenario, we consider the presence of a coordinator that
allocates spectrum to the SUs according to the submitted bids while
maximizing the worst-case social welfare corrupted by the attacker.
Hence, the interactions between the coordinator and the SUs are cast
as an \emph{auction} with the following elements:

\begin{itemize}
\item The auctionees are the SUs which aim at using the vacant channels.
\item The auctioneer is the coordinator which allocates the channels to SUs.
Afterwards, the auctioneer and coordinator are used interchangeably.
\item Each bid is denoted by $a_{ij,k}$, where $1\leq j,k\leq
M$. Here, $a_{ij,k}$ indicates the proper bid for SU $i$ to use
channel $j$ while the attacker jams channel $k$.
\item The following
constraints must be satisfied at each stage of the auction:
\begin{eqnarray}
\sum_{j=1}^M z_{ij}(t)\leq 1\qquad\qquad\nonumber\\
\begin{array}{ll} {\sum_{i=1}^N z_{ij}(t)= 1}, &\textrm{if channel $j$ is idle},\\
{\sum_{i=1}^N z_{ij}(t)= 0}, &\textrm{if channel $j$ is busy},\\
\end{array}
\end{eqnarray}
in which $z_{ij}(t)\in\{0,1\}$ shows that channel $j$ is allocated
to the $i$-th SU if $z_{ij}(t)=1$; and is not allocated otherwise.
\end{itemize}
In order to combat the 
jammer, the coordinator should 
assign the channels to the SUs via a random strategy.
In the next section, we will 
investigate 
 this optimal strategy.

\section{Anti-Jamming Decentralized Game Based on Learning Process}
In the previous section, the PC-game is proposed in order to extract
the anti-jamming mechanism under the condition that all SUs and the 
  auctioneer act as one player to defeat the malicious user. However,
this assumption may not hold in general since the SUs are selfish
and maybe untruthful. Unreliable information may lead to an improper
strategy for protection of the 
SUs against the jammer. Besides, the SUs
send their $M'^2$ bids to the coordinator, which has the high
complexity. Due to these drawbacks, this section suggests a
decentralized method according to the framework provided by the PC-game.

In the PC-game, we use a two level auction, and our aim is to
specify a distribution function to the actions. These actions can be
recognized by the first and second preferences of all the SUs. First,
pay attention to 
$\boldsymbol{p}_1^{*T}\boldsymbol{U}\boldsymbol{p}_2^{*}=\bigg(\sum_{l=1}^{\frac{N!}{(N-M')!}}\boldsymbol{p}^{*}_{1,l}\boldsymbol{U}^{l}\bigg)\boldsymbol{p}^{*}_2$
where $\boldsymbol{p}_1^*$ and $\boldsymbol{p}_2^*$ are the optimal policies of the auctioneer and the jammer, respectively. 
Moreover, $\boldsymbol{p}^{*}_{1,l}$ and $\boldsymbol{U}^{l}$ are  
 the $l$-th entry of $\boldsymbol{p}_1^*$ and the $l$-th row of payoff
matrix $\boldsymbol{U}$ of the original game in \textit{Definition
1}. If we extend each $\boldsymbol{U}^{l}$ into its elements, we
have the following formulation: \vspace{0 cm}
\begin{eqnarray}
\bigg(\sum_{l=1}^{\frac{N!}{(N-M')!}}\boldsymbol{p}^{*}_{1,l}U^{l}\bigg)\boldsymbol{p}^{*}_2=\bigg(\sum_{i=1}^N\sum_{j=1}^{M'}
\boldsymbol{p}_{u(i,j)}^*[a_{i,j,1},\ldots,a_{i,j,M'}]\bigg)\boldsymbol{p}^{*}_2,\vspace{-.5
cm}
\end{eqnarray}
in which $\boldsymbol{p}_{u(i,j)}^*$ is equal to the probability of
selection of the $j$-th channel for the $i$-th user. Every policy, 
which yields 
 the same $\boldsymbol{p}_{u(i,j)}^*$, is the optimal
strategy against the attacker. This fact motivates us to move from
the PC-game to a 
 distributed game. In the PC-game, we specify a
probability to each action distinguished by the first auction or
equivalently the first preferences of the SUs. By truthfulness
assumption and help of the mentioned fact, if each SU individually estimate the probabilities connected with the preferences over the channels, then the value of the PC-game obtained from (14) can be approximated by the following formulation:

\vspace{0 cm}
\begin{eqnarray}
\sum_{l_1=1}^{M'}\ldots\sum_{l_N=1}^{M'} Q_{l_1}\ldots Q_{l_N}
\boldsymbol{U}_{l_1,\dots,l_N}\boldsymbol{p}_2^{*}\approxeq
\boldsymbol{p}_1^{*T}\boldsymbol{U}\boldsymbol{p}_2^{*},\vspace{-.5
cm}
\end{eqnarray}
where $Q_{l_i}$ and $\boldsymbol{U}_{l_1,\dots,l_N}$ are the
estimated probability related to the first preference by the $i$-th
SU
 and the value of the game when the SUs's preferences are
${l_1,\dots,l_N}$, respectively.

Each auction consists of $M'$ allocations to the SUs. Note that from
\textit{Proposition $1$}, we only need $M'$ auctions to reach to the
best response against the jammer. Thus, there are at most ${M'}^2$
important probabilities, $\boldsymbol{p}_{u(i,j)}^*$, at each stage
of game.
 Moreover, it can be easily demonstrated that every policy, which has these $M'^2$ probabilities, is optimal from the perspective of the zero-sum game. On the other side, each SU has control over $M'$ probabilities for stating its first preference over the channels. From this point of view, the SUs have $N \times M'$ variables for estimations of $M'^2$ important probabilities which are improved with increasing $N$ compared to $M$.

At this time, by applying the auction feature to the game, the coordinator can
get payments from the 
SUs. The payment of each SU is constructed from two parts. One
payment part is related to the
first-auction and the other part 
is associated with the second-auction. The computation approach of the payment for the
first-auction  which is similar to~\cite{van der Schaar}is stated as
\begin{eqnarray}
p_i^t=\sum_{(k=1,k\neq i)}^N \sum_{j=1}^{M'} z_{kj}^{t,opt} a_{kj}'
(t) -\max_{(z_{kj} \mid  {a'_{ij}=0},  \forall  j)} \sum_{(k=1,k\neq i)}^N
\sum_{j=1}^{M'} z_{kj}^{t} a_{kj}'(t),
\end{eqnarray}
in which $z_{kj}^{t,opt}$ is the solution of the first auction.
 For the second-auction, this payment can also be computed by the same procedure while the selected SUs 
 in the first-auction and their corresponding announced bids are omitted by the coordinator.  The PD-game
procedure is described in Table~\ref{tab:table 2}. We show these
payments oblige the SUs to bid truthfully. In order to prove that
the proposed distributed game (PD-game) contains the truthful
mechanism, first we define the concept of truthfulness in
expectation.
\begin{definition}
Assume $v_i$, $v_i'$, $v_{-i}$ and $p_i^t$ are the real value of bid for user $i$, the announced value of bid for user $i$, the value of bids for other users and the payment assigned to user $i$,
respectively. A mechanism is truthful in expectation when for any user $i$ and any $v_{-i} \in V_{-i}$ of other users, the expectation of profit attained by user $i$, $\mathbb{E}\{{v_i}'-{p_i^t}({v_i}',v_{-i})\}$ is maximum if $v_i'=v_i$~\cite{Nisan}.
\end{definition}

We now focus on a proposition which states that the PD-game is
truthfulness in expectation.

\begin{proposition}
The proposed procedure for assigning payment satisfies truthfulness
in the expectation criterion.
\end{proposition}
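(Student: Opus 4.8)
The plan is to read the payment rule $p_i^t$ as the Vickrey--Clarke--Groves (Clarke pivot) charge of the first-level auction — that is, $p_i^t$ is the amount by which SU $i$'s presence lowers the other SUs' aggregate announced valuation, $W_{-i}^{0}-\sum_{k\ne i}\sum_j z_{kj}^{t,opt}a'_{kj}$ with $W_{-i}^{0}=\max_{z:\,a'_{ij}=0\,\forall j}\sum_{k\ne i}\sum_j z_{kj}a'_{kj}$ — and then to run the classical VCG incentive-compatibility argument realization by realization, finally taking expectations over the coordinator's randomized allocation. First I would fix the stage, the SU $i$, an arbitrary announced profile $v_{-i}$ of the other SUs, and a realization of the channel-occupancy state and of the jammer's action; conditioned on all of this, the first-level allocation $z^{t,opt}$ is, by construction of the two-level auction, a feasible allocation (obeying the auction constraints) that maximizes the announced welfare $\sum_k\sum_j z_{kj}a'_{kj}$, and $W_{-i}^{0}$ does not depend on $i$'s announcement $a'_i$.

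Next I would carry out the VCG bookkeeping: substituting the Clarke-pivot $p_i^t$ into the quasi-linear utility of $i$ (its true value on the allocated channel minus its payment) gives
\[
u_i(a'_i)\;=\;\sum_j z_{ij}^{t,opt}(a'_i)\,v_{ij}\;+\;\sum_{k\ne i}\sum_j z_{kj}^{t,opt}(a'_i)\,a'_{kj}\;-\;W_{-i}^{0}.
\]
Because $W_{-i}^{0}$ is constant in $a'_i$, $i$ wishes to maximize the first two terms; but $\sum_j z_{ij}^{t,opt}v_{ij}+\sum_{k\ne i}\sum_j z_{kj}^{t,opt}a'_{kj}\le\max_{z\ \text{feasible}}\big(\sum_j z_{ij}v_{ij}+\sum_{k\ne i}\sum_j z_{kj}a'_{kj}\big)$, and the right-hand side is attained by $z^{t,opt}(a'_i)$ precisely when $i$'s announcement makes the maximized objective coincide with $\sum_j z_{ij}a'_{ij}+\sum_{k\ne i}\sum_j z_{kj}a'_{kj}$, i.e.\ when $a'_{ij}=v_{ij}$ for all $j$. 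Hence no deviation can raise $u_i$ above its value at $a'_i=v_i$ for that realization. Taking expectation over the coordinator's randomized policy $\boldsymbol{p}_1^{*}$, the jammer action and the channel/traffic states, and using that the inequality holds realization by realization, linearity of expectation yields $\mathbb{E}\{v_i'-p_i^t(v_i',v_{-i})\}$ maximized at $v_i'=v_i$, which is exactly the definition of truthfulness in expectation.

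It then remains to transfer this to the second-level auction and to combine the two payment parts: since the second-level payment is computed by the same procedure after the coordinator deletes the first-level winners and their announced bids, conditioned on the first-level outcome it is again a Clarke-pivot auction on the residual set of SUs and channels, so the identical pointwise-then-expectation argument applies, and adding the two payment components preserves the inequality. The step I expect to be the main obstacle is the claim invoked above that each realization of the randomized two-level auction still selects a welfare-maximizing feasible allocation — i.e.\ reconciling the anti-jamming (maximin) randomization with the welfare-maximization premise that VCG needs. I would justify this using \emph{Proposition~1} together with the decomposition of $\boldsymbol{p}_1^{*T}\boldsymbol{U}\boldsymbol{p}_2^{*}$ established earlier in this section: the optimal anti-jammer policy only pins down the marginal channel-selection probabilities $\boldsymbol{p}_{u(i,j)}^{*}$, so the coordinator retains the freedom to realize those marginals through allocations that are efficient with respect to the submitted bids, and it is exactly such a realization that the two-level rule uses. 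A secondary point to verify is that a SU cannot profit by manipulating across the two levels (for instance by deliberately losing the first level to influence the second): this is handled because $i$'s own announcement enters $u_i$ only through the two bid-sensitive terms above and never through the pivot term $W_{-i}^{0}$, so the cross-level coupling appears only inside pivot terms that are constant in $i$'s bid.
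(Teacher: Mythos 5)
Your proposal is correct and follows essentially the same route as the paper's Appendix~C: both arguments are the classical Clarke-pivot bookkeeping, writing SU $i$'s expected quasi-linear utility as (reported social welfare of the two levels) minus (pivot terms independent of $i$'s bid) and concluding that, since the two-level allocation maximizes the reported welfare, truthful reporting is optimal in expectation, with the second level handled by the identical argument on the residual SUs and channels. The only point to tighten is the order of averaging over the jammer: the allocation $z^{t,opt}$ maximizes welfare with respect to the jammer-averaged bids $a'_{ij}=\sum_h Q_{2,h}a_{ij,h}$, so the VCG inequality holds only after taking the expectation over $A_0$ (exactly as the paper does when it replaces $\sum_h Q_{2,h}a_{ij,h}$ by $a'_{ij}$ in (42)--(43)), not realization by realization in the jammer's action — which is precisely why the guarantee is truthfulness \emph{in expectation} rather than pointwise.
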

\begin{proof}
The proof is given in Appendix C.
\end{proof}


Note that the payment of each SU, which is dependent on all the SUs' bids, converts the profit gained by each SU into a notion of the overall value of the zero-sum game. Thus, we are trying to model the game between each SU and the attacker as the zero-sum game separately so that the separate game for each SU has some external factors related to 
 other SUs, and each SU is effective only on
a certain amount of the profit.

By doing so, every SU computes the distribution of stating its preference over the channels. In addition, the communication burden of 
 stating its bids obviously plummets. 
  Since, the SU only
sends $M'$ bids instead of stating $M'^2$. Duties of the coordinator
decreases since it 
 only computes the first and second-auctions and their related payments. Indeed,
the utility matrix of the separate game between each SU and the attacker is modeled as a
$(M'\times M')$ 
 matrix because the SU has $M'$ choices for the announcement of its first preference.
\begin{table}
\centering
\renewcommand{\arraystretch}{2}
\caption{The proposed decentralized game} \label{tab:table 2}
\centering
\begin{tabular}{l}
\hline \textbf{Step 1}. The SUs submit the bid based upon (12) to the
coordinator. At the same time, the SUs announce their preferences over\\
channels in order to be used in the first and second auctions.\\
\textbf{Step 2}. First auction is computed for the first preferences of
the SUs. Then, allocation and payment for each SU is assigned to\\ them
by using (7) and (16).\\
\textbf{Step 3}. Similarly, the second
auction is computed for the remaining channels and the SUs.\\
\hline\vspace{-1 cm}
\end{tabular}
\end{table}
Note that our algorithm is distinct from work suggested in~\cite{Multiagent planning} in which authors employ a factored approximation of the overall Q-function based upon the linear combination of users' Q-function for the stochastic game. The proposed algorithm is not applicable in our scenario because the SUs are selfish and interested in benefiting further. Indeed, the payment structure makes the profit of SUs' network directly relevant to each individual profit due to \textit{Proposition $3$}. Instead, $p^{*}_1$ is estimated by SUs' probabilities, $Q_{l_1}, \ldots, Q_{l_N}$.

The fundamental difficulty of the PD-game is that each SU does not
know enough about its related separate utility matrix. Remembering 
 that the 
  game will be repeated infinitely, and therefore, the SUs can learn their utilities by a certain learning scheme. We employ the scheme 
   proposed in~\cite{Basar}. The advantage of this scheme is that each SU can adapt different patterns of learning.   The 
 probabilistic strategy over the actions and utility of each stage can be learned through the game. 
  First, we apply an iterative Boltzmann-Gibbs strategy which is stated as
\begin{eqnarray}
\boldsymbol{\sigma}_i \big(\boldsymbol{q}_{1i}^t,
\widehat{\textbf{u}}_{1i,t}, \boldsymbol{S}(t)\big)(j)=
\frac{\boldsymbol{q}_{1i}^t\big(j,\boldsymbol{S}(t)\big)
e^{\frac{\widehat{\textbf{u}}_{1i,t}\big(j,\boldsymbol{S}(t)\big)}{\epsilon}}}{\sum_{j=1}^{M'}
\boldsymbol{q}_{1i}^t\big(j,\boldsymbol{S}(t)\big)
e^{\frac{\widehat{\textbf{u}}_{1i,t}\big(j,\boldsymbol{S}(t)\big)}{\epsilon}}}\vspace{-1
cm}
\end{eqnarray}
where $\boldsymbol{q}_{1i}^{t}\big(j, \boldsymbol{S}(t)\big)$ and $\widehat{\textbf{u}}_{1i,t}$ are distribution of selecting channel $j$ as the first preference of SU $i$ and the estimated average payoffs updated at 
 iteration $t$, respectively~\cite{Basar}. Next, we update distribution and payoff, respectively, as
\begin{eqnarray}
\boldsymbol{q}_{1i}^{(t+1)}\big(\boldsymbol{S}(t)\big)=(1-\lambda_{1i,t})\boldsymbol{q}_{1i}^t\big(\boldsymbol{S}(t)\big)
+\lambda_{1i,t}\boldsymbol{\sigma}_i\big(\boldsymbol{q}_{1i}^t, \widehat{\textbf{u}}_{1i,t}, \boldsymbol{S}(t)\big)(j)\qquad\qquad\\
\widehat{\textbf{u}}_{1i,t+1}\big(\boldsymbol{S}(t)\big)={\widehat{\textbf{u}}}_{1i,t}\big(\boldsymbol{S}(t)\big)+\frac{\mu_{1i,t}}{\boldsymbol{q}_{1i}^t
\big(j,\boldsymbol{S}(t)\big)}
\bigg(U_{1i,t}\big(\boldsymbol{S}(t)\big)-\widehat{\textbf{u}}_{1i,t}\big(\boldsymbol{S}(t)\big)\bigg).
\end{eqnarray}
in which $U_{1i,k,t}$ is the profit gained by SU $i$ at time $t$
when selecting channel $j$ as its preference, which is zero when no
channel is assigned to it, and is $a_{{ij},k}'-p_{i}(t)$ when
channel $j$ is assigned.  Furthermore, $\mu_{1i,t}$ and $\lambda_{1i,t}$
are the learning rates indicating players' capabilities of
information retrieval and update. Therefore, each SU can learn the distribution over its preference from  implementing a Q-learning based method.  It can be proved that Q-learning method converges to the optimal solution for only single-agent case; However, there is no such a guarantee for multi-agent cases~\cite{Iterative stochastic process}. In the next section, simulation results illustrate the convergence of the PD-game to the sub-optimal solutions.

\vspace{.5 cm}\section{Simulation results}\vspace{.5 cm} In this
section, we provide simulation results to verify the truthful
anti-jamming network. We consider a cognitive radio environment with
$M$ channels, $N$ secondary users and a malicious user. We assume
that the state of signal to noise ratio for SU $i$ and channel $j$,
$\gamma_{ij}$, has three values ${10, 30}$ and $50$. The probability
of state transitions from these states are
$p(\gamma_{ij}=10|\gamma_{ij}^1=10)=0.4$,
$p(\gamma_{ij}=30|\gamma_{ij}^1=10)=0.3$,
$p(\gamma_{ij}=10|\gamma_{ij}^1=30)=0.3$,
$p(\gamma_{ij}=30|\gamma_{ij}^1=30)=0.4$,
$p(\gamma_{ij}=10|\gamma_{ij}^1=50)=0.3$, and
$p(\gamma_{ij}=30|\gamma_{ij}^1=50)=0.3$. In addition,
${\alpha}_{N2F, j}=0.3$ and ${\alpha}_{F2N, j}=0.4$ for $1 \leq
j\leq M$. We set also $\text{BER}_{\text{tar}}$ for all the users in
(5) as $10^{-5}$.
\subsection{Convergence}
 The convergence speed of the PC-game and the PD-game for three SUs are investigated in Fig.~2 and
Fig.~3 when $M=2$ and $M=3$, respectively. Besides, $B_{\max}=2$ and
$\overline{f_i}= 0.5$ for all SUs for the either case. The
normalized cumulative value of SUs is used as a convergence
comparison tool. As Fig.~2 and Fig.~3 report, both algorithms
converge; however, the PD-game takes longer time to reach the stable
solution. The PD-game is done in the decentralized scheme with
incomplete information. Therefore, it needs more times to learn the
unknown parameters. In particular, the convergence rates in Fig.~3
for both the PC-game and PD-game are quite slower than those in
Fig.~2. Indeed, increase in $M$ leads to rises in the numbers of the
states and the complexity of the system. Consequently, the required
numbers of iterations in Fig.~3 explicitly becomes greater.
\begin{figure}[t]
  \begin{center}
    \includegraphics[height=2.5 in, width=3.6 in]{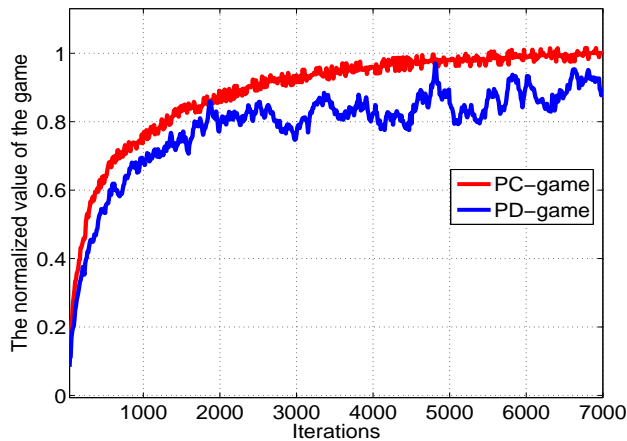} 
    \begin{center}\vspace{-.2 cm}
    \caption{The convergence of the normalized cumulative value of SUs in the PD-game and PC-game in a networks with $M=2$ and $N=3$.} \vspace{-.2 cm}
    \end{center}
    \label{fig 2}
  \end{center}
\end{figure}

\begin{figure}[t]
  \begin{center}
    \includegraphics[height=2.5 in, width=3.6 in]{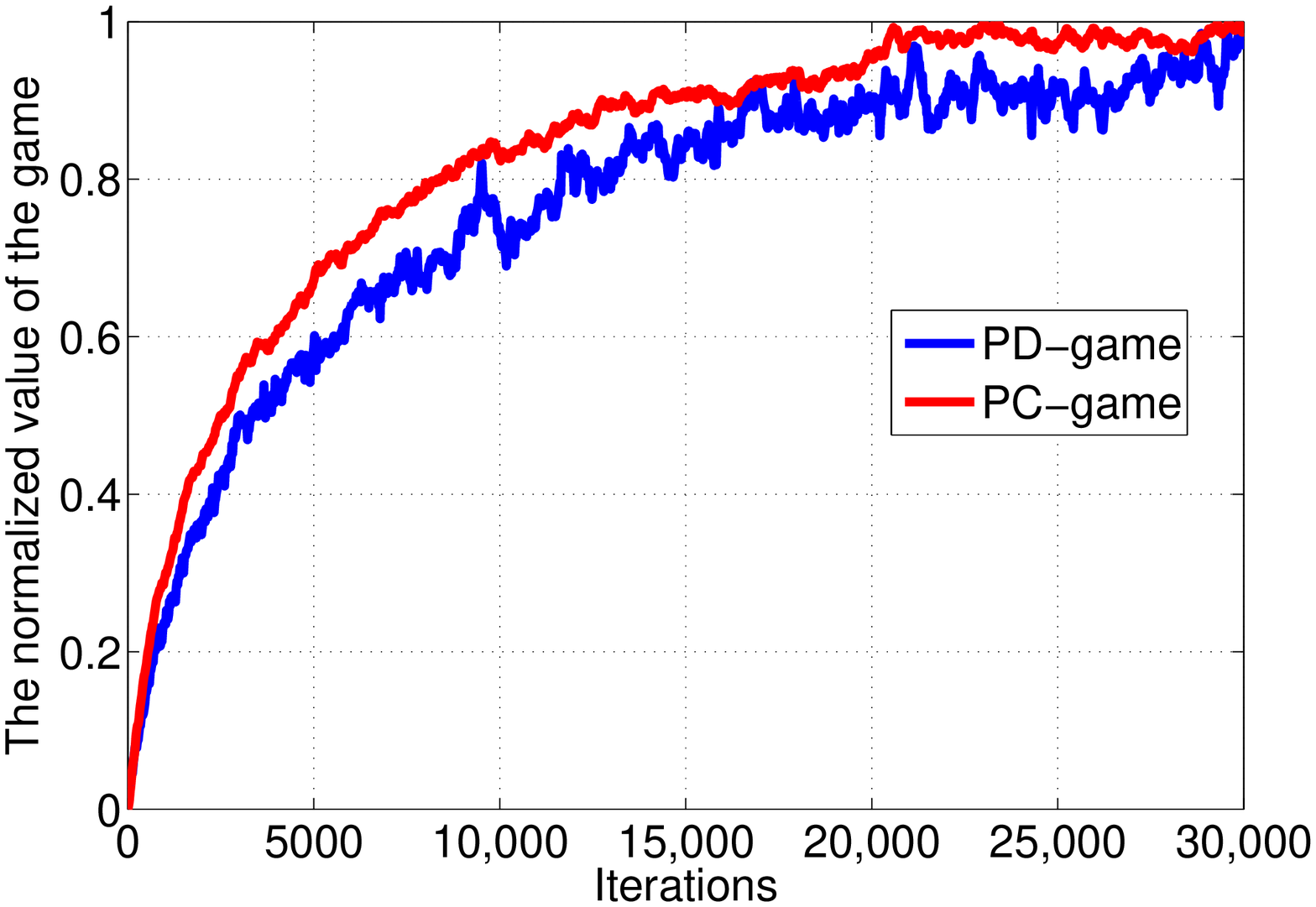} 
    \begin{center}\vspace{-.2 cm}
    \caption{The convergence of the normalized cumulative value of SUs in the PD-game and PC-game in a networks with $M=3$ and $N=3$.} \vspace{-.2 cm}
    \end{center}
    \label{fig 2}
  \end{center}
\end{figure}

\begin{figure}[t]
  \begin{center}
    \includegraphics[height=2.5 in, width=3.6 in]{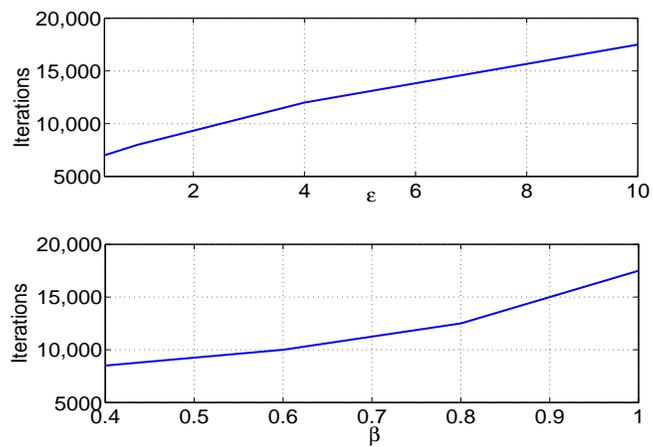} 
    \begin{center}\vspace{-.2 cm}
    \caption{The effect of different $\beta$ and $\epsilon$ on the performance of the PD-game.} \vspace{-.2 cm}
    \end{center}
    \label{fig 4}
  \end{center}
\end{figure}

The learning parameters $\lambda_{1i,t}$, $\mu_{1i,t}$ and
$\epsilon$ in (17), (18) and (19) play important roles in the
convergence of the PD-game. In~\cite{Basar}, it is shown that
$\frac{\lambda_{1i,t}}{\mu_{1i,t}}\rightarrow 0$ for  assurance of
the convergence. Hence, we consider
$\frac{\lambda_{1i,t}}{\mu_{1i,t}}=\frac{\frac{1}{T_{\boldsymbol{S}}^{(1+\beta)}}}{\frac{1}{T_{\boldsymbol{S}}}}$
where $T_s$ is the repetition numbers of state occurrence, where
$\beta>0$. Fig.~4 depicts the effect of different $\beta$ and
$\epsilon$ on the iterations required for the convergence under the
mentioned condition when $M=2$. It is clear that when these
parameters increase, the convergence speed decrease, since the
impact of instantaneous utilities on current strategy decreases.

%
%

\subsection{The effects of SU parameters on performance}
In this part, the effects of the maximum allowable $B_{\max}$, the
number of channels $M$, and the number of users $N$ on the PD-game
and the PC-game are evaluated. In order to have a similar benchmark
for comparison of two methods, we define a new parameter $\theta$
based on (6) as,
\begin{eqnarray*}
\theta=\sum_{t=0}\sum^N_{i=1} -r_{i}(t)/N.
\end{eqnarray*}
Fig.~5 and Fig.~6 illustrate the performance of the PC-game and the
PD-game by $\theta$ for variable $B_{\max}$ and $N$ when $M=2$ and
$M=3$, respectively. The other parameters are set alike to the
previous part. In Fig.~5, the SU with the greater $B_{\max}$ is able
to hold the data for a longer time. Thus, the increment in
$B_{\max}$ decreases $\theta$. In other words, it can improve the
performance of the system. However, increase in $N$ has opposite
impact on the $\theta$ which is result of increasing the dropping
probability of data. Moreover, Fig.~6 shows the performance when
$M=3$. Note that both the PC-game and PD-game in Fig.~6 have lower
$\theta$ rather than those in Fig.~5 for the same condition. Indeed,
$M=3$ increases the opportunities of available vacant channels for
each SU; therefore, decreases the numbers of unsent buffered
information.
\begin{figure}[t]
  \begin{center}
    \includegraphics[height=2.5 in, width=3.6 in]{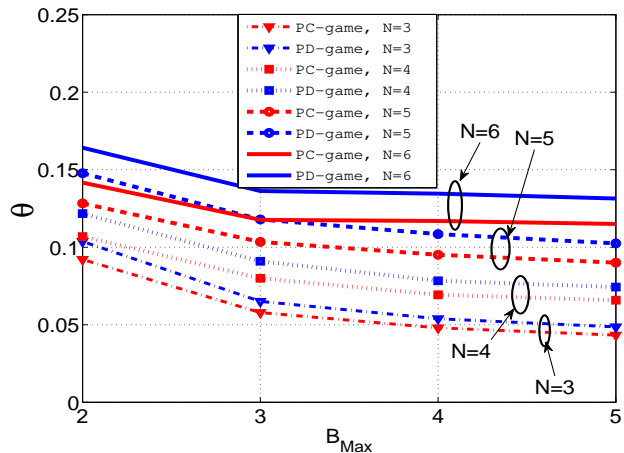} 
    \begin{center}\vspace{-.2 cm}
    \caption{The effect of different $B_{\max}$s and $\epsilon$s on the performance of the PD-game.} \vspace{-.2 cm}
    \end{center}
    \label{fig 4}
  \end{center}
\end{figure}

\begin{figure}[t]
  \begin{center}
    \includegraphics[height=2.5 in, width=3.6 in]{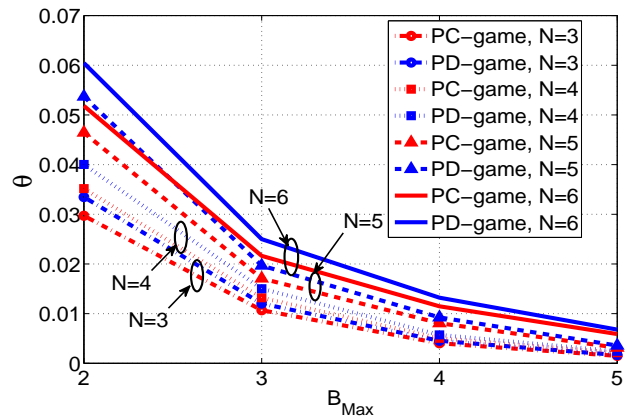} 
    \begin{center}\vspace{-.2 cm}
    \caption{The effect of different $B_{\max}$ and $N$ for $M=3$ on the performance of the PD-game.} \vspace{-.2 cm}
    \end{center}
    \label{fig 6}
  \end{center}
\end{figure}

\begin{figure}[t]
  \begin{center}
    \includegraphics[height=2.5 in, width=3.6 in]{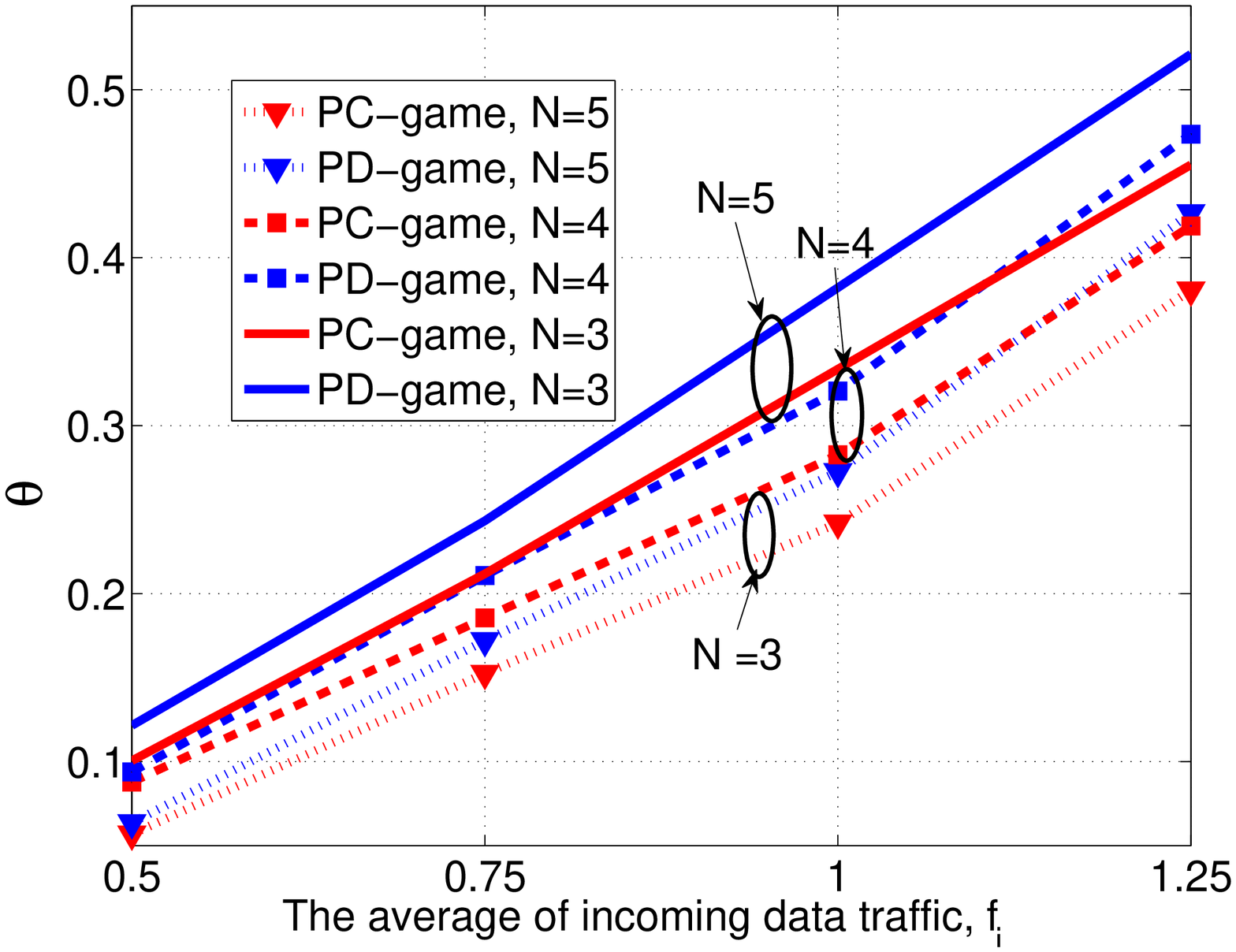} 
    \begin{center}\vspace{-.2 cm}
    \caption{The effect of different $\overline{f_i}$ and $N$ for $M=2$ on the performance of the PD-game and the PC-game.} \vspace{-.2 cm}
    \end{center}
    \label{fig 2}
  \end{center}
\end{figure}

\begin{figure}[t]
  \begin{center}
    \includegraphics[height=2.5 in, width=3.6 in]{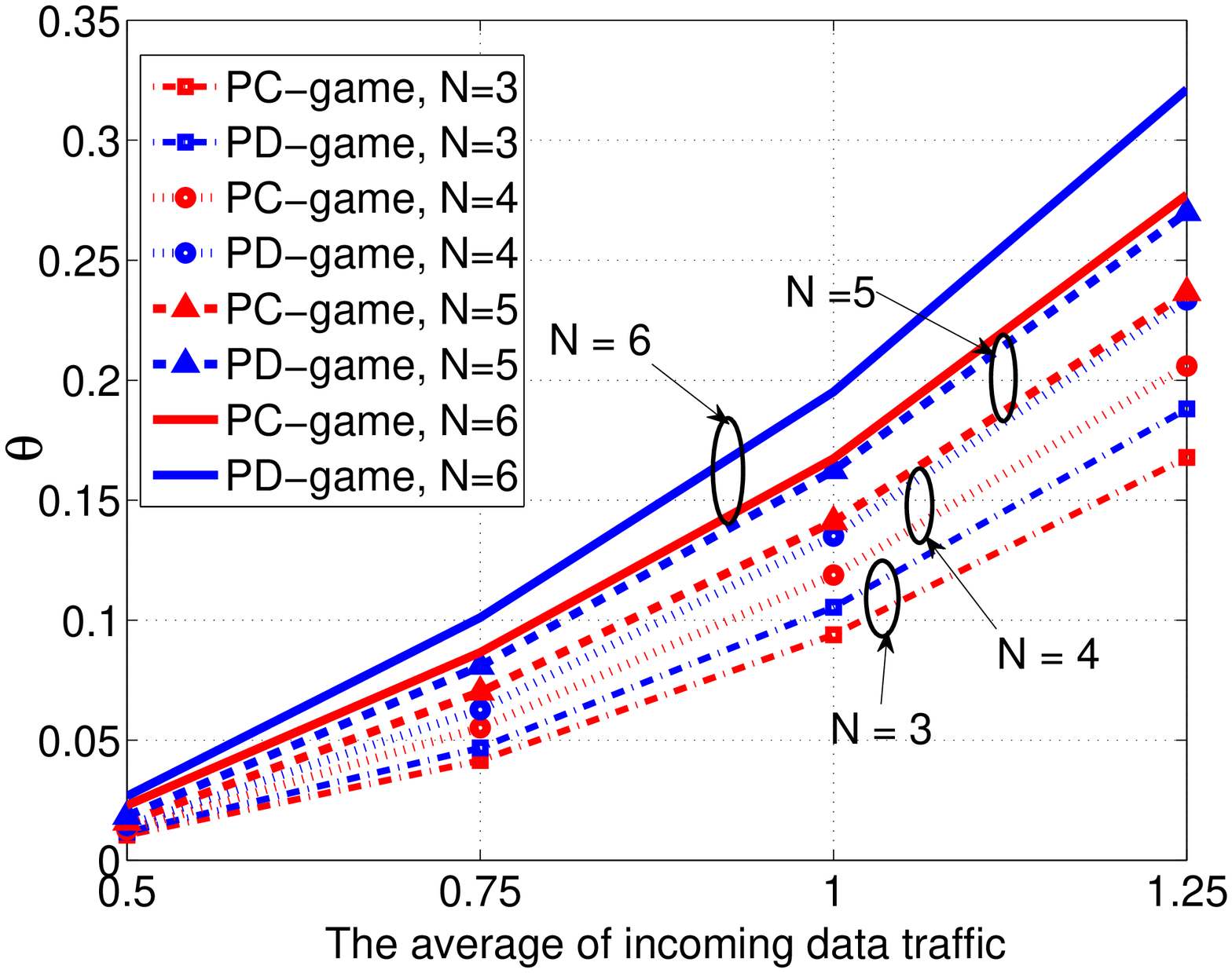} 
    \begin{center}\vspace{-.2 cm}
    \caption{The effect of different $\overline{f_i}$ and $N$ for $M=3$ on the performance of the PD-game and the PC-game.} \vspace{-2 cm}
    \end{center}
    \label{fig 2}
  \end{center}
\end{figure}

\begin{figure}[t]
  \begin{center}
    \includegraphics[height=2.5 in, width=3.6 in]{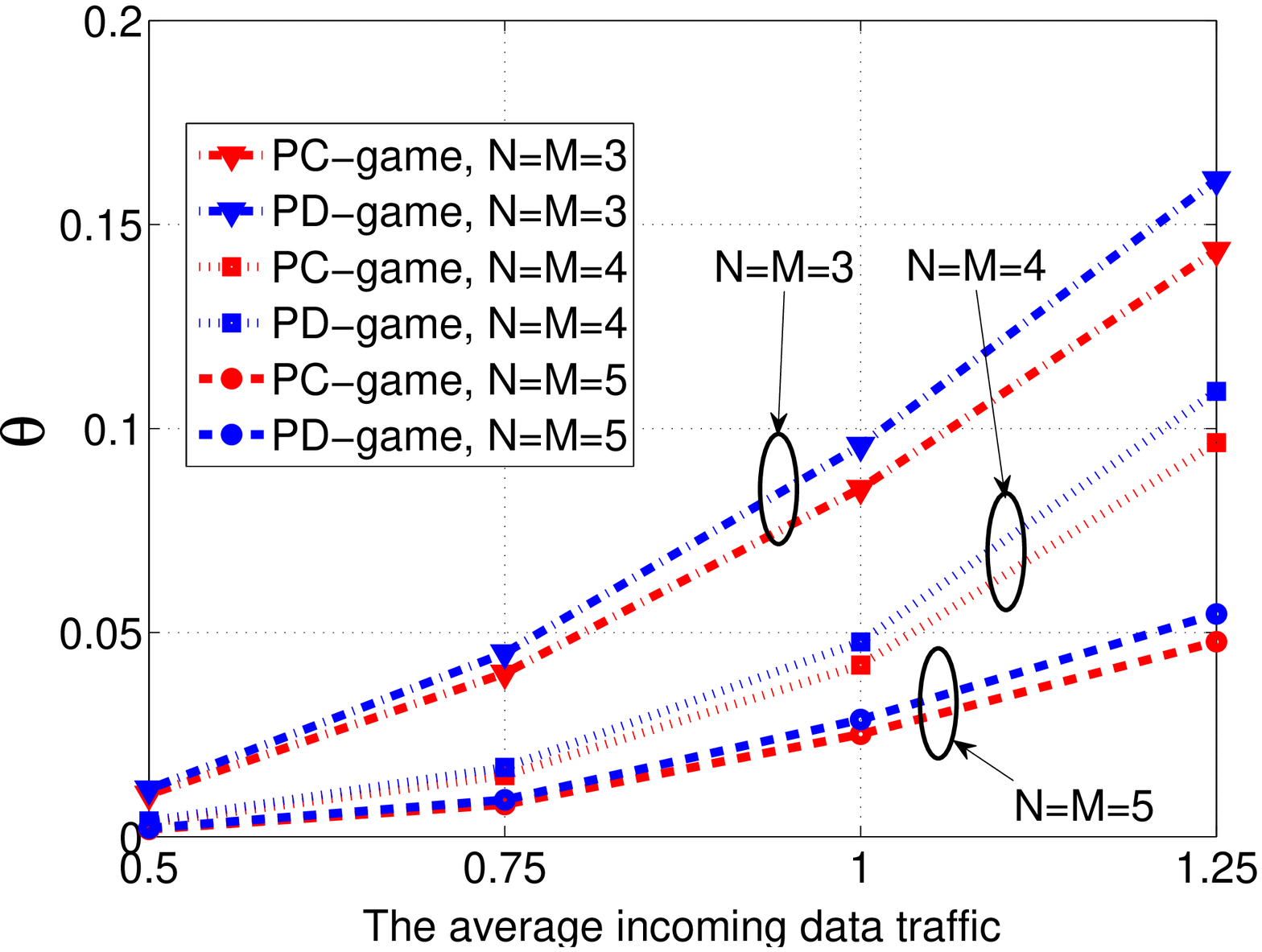} 
    \begin{center}\vspace{-.2 cm}
    \caption{The effect of different $\overline{f_i}$ on the performance of the PD-game.} \vspace{-2 cm}
    \end{center}
    \label{fig 2}
  \end{center}
\end{figure}

The performance of the scenario for different average of incoming
traffic $\overline{f_i}$ and the numbers of SUs is shown in Fig.~7
and Fig.~8. The results are obtained for $M=2$ and $M=3$,
respectively. Rise in $\overline{f_i}$ means that the average of
incoming traffic increase. The outcome of the rise is to receive
more traffic data at each stage of the game; as a result, the
average unsent traffic $\theta$ increase. Finally, Fig.~9 displays
$\theta$ versus $\overline{f_i}$ when $N=M$. Notice that increase in
$N$ along with $M$ causes $\theta$ to be lower which validates our
discussion about the performance of the scheme.

%





%
\vspace{-1 cm}
\section{Conclusion}
Spectrum management among the SUs is a vital issue for CR networks,
and auction theory provides a helpful tool to allocate spectrum to
SUs. In this article, first, we proposed a centralized two-level
auction which combined both the advantages of efficient resource
assignment to SUs and acting against the malicious user. Next, a
proposition for the zero-sum game was given which can be applied in
a game with the non-uniform number of users' actions. More
importantly, we introduced a decentralized protocol based upon the
centralized method properties and the mentioned proposition. The
decentralized scheme obliges SUs to bid truthfully because SUs can
gain higher profit in expectation for the long-term interaction.
Simulation studies show that both the centralized and decentralized
scheme converge in the limited numbers of stages. Moreover, the
performance of the proposed approach are comparable with the
efficient centralized solution. 
 \appendices
\section{Proof of Proposition 1}
Consider a zero-sum game with payoff matrix $\boldsymbol{O}$ as follows 

\[
 \boldsymbol{O} =
\begin{pmatrix}
o_{1,1}  & \cdots & o_{1,l_2} \\
\vdots & \ddots & \vdots \\
o_{l_1,1} & \cdots & o_{l_1,l_2}
\end{pmatrix}
\] 
\begin{equation}
\end{equation}
in which $o_{n,m}$ shows that player $1$ and player $2$ obtain 
${o_{n,m}}$ and ${-o_{n,m}}$ profit
 when they select their 
  $n$-th and $m$-th actions, respectively. To attain the optimal solution~\cite{non_game}, we should consider mixed strategy with the help of the following equation:
\begin{equation}
\max_{\boldsymbol{p}_1}\min_{\boldsymbol{p}_2} \boldsymbol{p}_1^T
 \boldsymbol{O}\boldsymbol{p}_2=\min_{\boldsymbol{p}_2}\max_{\boldsymbol{p}_1}
\boldsymbol{p}_1^T  \boldsymbol{O}\boldsymbol{p}_2={v}
\end{equation}
where ${\boldsymbol{p}_1}$ and ${\boldsymbol{p}_2}$ indicate the probability distributions over the related actions 
of player $1$ and player $2$, and ${v}$ is the value of the game. Moreover, $\boldsymbol{O}$ can be expressed as,
\begin{equation*}
\boldsymbol{O}=\big[\boldsymbol{o}_{1}^T, \boldsymbol{o}_{2}^T,
\ldots, \boldsymbol{o}_{l_1}^T\big]^ T
\end{equation*}
where ${\boldsymbol{o}_{i}}$ is ${1\times l_2}$ vector for $i$\!
$\in {(1,\ldots,l_1)}$. Hence, ${\boldsymbol{v}_1=\boldsymbol{p}_1^T
\boldsymbol{O}=}\sum_{i=1}^{N} p_{1,i}\boldsymbol{o}_{i}$ and
${v}\!\!=\boldsymbol{v}_1\boldsymbol{p}_2$. In addition, 
 we consider all the entries of matrix are more than zero. The value of the game, which contains $l_1$ actions with vectors
$\boldsymbol{o}_{1},\ldots,\boldsymbol{o}_{l_1}$, is denoted by
$\text{zerosum}(\boldsymbol{o}_{1},\dots,\boldsymbol{o}_{l_1})$.
First, we state a lemma in order to prove the proposition.\\
\begin{lemma} 
If the following relationship exists between $\boldsymbol{o}_{1},\ldots,\boldsymbol{o}_{l_1}$, player $1$ 
 can play the game without the $l_1$-th action while it gets the same value,
\begin{eqnarray}
 \boldsymbol{o}_{l_1}=\lambda_1  \boldsymbol{o}_{1}+\lambda_2  \boldsymbol{o}_{2}+\ldots+\lambda_{l_1-1}  \boldsymbol{o}_{l_1-1} \nonumber\\
\sum_{i=1}^{l_1-1} \lambda_i=1,~~
-\infty<\lambda_i<\infty, \forall i.\,\,\,
\end{eqnarray}
\end{lemma}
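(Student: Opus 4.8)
The plan is to prove that the two game values $v:=\text{zerosum}(\boldsymbol{o}_{1},\dots,\boldsymbol{o}_{l_1})$ and $v':=\text{zerosum}(\boldsymbol{o}_{1},\dots,\boldsymbol{o}_{l_1-1})$ coincide, by establishing the two inequalities separately. The inequality $v\ge v'$ is essentially free: player~$1$ is the maximizer, so enlarging its action set cannot decrease the value. Concretely, any mixed strategy of player~$1$ in the $(l_1-1)$-action game is also admissible in the $l_1$-action game (put probability $0$ on action $l_1$), it induces the same row vector $\boldsymbol{v}_1=\sum_i p_{1,i}\boldsymbol{o}_i$ and hence the same worst-case payoff $\min_m(\boldsymbol{v}_1)_m$; taking the maximum over the larger strategy set and invoking the max--min identity for the value of a zero-sum matrix game yields $v\ge v'$.

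For the reverse inequality I would argue on player~$1$'s side. Let $\boldsymbol{p}_1^{*}=(p_{1,1}^{*},\dots,p_{1,l_1}^{*})$ be an optimal (value-achieving) strategy for player~$1$ in the full game, with induced row $\boldsymbol{v}_1=\sum_{i=1}^{l_1}p_{1,i}^{*}\boldsymbol{o}_i$, so that $v=\min_m(\boldsymbol{v}_1)_m$. Substituting the affine relation of the lemma, $\boldsymbol{o}_{l_1}=\sum_{i=1}^{l_1-1}\lambda_i\boldsymbol{o}_i$, gives
\[
\boldsymbol{v}_1=\sum_{i=1}^{l_1-1}\bigl(p_{1,i}^{*}+p_{1,l_1}^{*}\lambda_i\bigr)\boldsymbol{o}_i=:\sum_{i=1}^{l_1-1}q_i\,\boldsymbol{o}_i ,
\]
and, because $\sum_{i=1}^{l_1-1}\lambda_i=1$, the coefficients satisfy $\sum_{i=1}^{l_1-1}q_i=(1-p_{1,l_1}^{*})+p_{1,l_1}^{*}=1$. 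Thus player~$1$ can reproduce the \emph{same} effective row $\boldsymbol{v}_1$, and therefore the same guaranteed value $v$, without ever playing action $l_1$ --- provided the $q_i$ are nonnegative, in which case $(q_1,\dots,q_{l_1-1})$ is a bona fide mixed strategy of the reduced game achieving value at least $v$, hence $v'\ge v$.

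The delicate point, and the one I expect to be the main obstacle, is that the $\lambda_i$ are only required to be real (not nonnegative), so some $q_i$ may be negative and $(q_1,\dots,q_{l_1-1})$ need not be a probability vector. To patch this I would switch to player~$2$'s side: let $\boldsymbol{p}_2'$ be an optimal strategy of player~$2$ in the reduced game, so $\boldsymbol{o}_i\boldsymbol{p}_2'\le v'$ for all $i\le l_1-1$; it then suffices to show $\boldsymbol{p}_2'$ remains optimal after action $l_1$ is appended, i.e.\ that $\boldsymbol{o}_{l_1}\boldsymbol{p}_2'=\sum_{i=1}^{l_1-1}\lambda_i(\boldsymbol{o}_i\boldsymbol{p}_2')\le v'$. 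This holds as soon as $\boldsymbol{o}_i\boldsymbol{p}_2'=v'$ for every $i$ (then the sum equals $v'\sum_i\lambda_i=v'$), which is guaranteed by complementary slackness when player~$1$'s optimal strategy in the reduced game is completely mixed; this is presumably where the hypothesis that every entry of $\boldsymbol{O}$ is strictly positive is meant to enter. I would therefore devote the bulk of the proof to reducing, under strict positivity, to a completely mixed optimal strategy for player~$1$ (or, alternatively, to bounding $\boldsymbol{o}_{l_1}\boldsymbol{p}_2'$ above by $v'$ directly from positivity of the entries), after which the two inequalities give $v=v'$ and hence the claim.
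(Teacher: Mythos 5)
Your first inequality and your diagnosis of where the naive substitution breaks down are both correct, but the patch you propose for the reverse inequality cannot be carried out. You reduce everything to the claim that player~$1$'s optimal strategy in the reduced game can be taken completely mixed (so that $\boldsymbol{o}_i\boldsymbol{p}_2'=v'$ for every $i\le l_1-1$ by complementary slackness), and you hope that strict positivity of the entries delivers this. It does not: positivity says nothing about domination among rows. Take $l_1=3$, $l_2=2$, $\boldsymbol{o}_1=(2,1)$, $\boldsymbol{o}_2=(3,2)$. Row $1$ is strictly dominated, the reduced game has value $v'=2$ with the unique optimum being the pure row $2$, and player~$2$'s optimal column gives $\boldsymbol{o}_1\boldsymbol{p}_2'=1<v'$, so the complementary-slackness identity you need is unavailable.

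More importantly, the lemma as stated is false, so no repair of the final step is possible. With the rows above, put $\boldsymbol{o}_3=-\boldsymbol{o}_1+2\boldsymbol{o}_2=(4,3)$; the coefficients $\lambda_1=-1$, $\lambda_2=2$ sum to $1$ and every matrix entry is positive, yet $\text{zerosum}(\boldsymbol{o}_1,\boldsymbol{o}_2,\boldsymbol{o}_3)=3$ (play row $3$ purely) while $\text{zerosum}(\boldsymbol{o}_1,\boldsymbol{o}_2)=2$. Deleting an \emph{affine} (as opposed to convex) combination of the other rows can strictly lower the value, precisely through the mechanism you identified: when some $\lambda_i<0$ multiplies a row on which player~$2$'s optimal strategy achieves strictly less than $v'$, the appended row satisfies $\boldsymbol{o}_{l_1}\boldsymbol{p}_2'>v'$ and the value jumps. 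The statement only becomes true under extra hypotheses (all $\lambda_i\ge 0$, or a completely mixed optimum for the reduced game, which is what your argument actually proves). The paper's own proof takes a different route --- it rewrites the affine combination as a nested chain of pairwise affine combinations and argues by contradiction on the sign of the redistributed weight $q_{1,1}^*+h_{1,1}q_{1,l_1}^{\prime *}$ --- but it necessarily founders on the same example. Your instinct that the negative-coefficient case is the crux was right; it is not merely the hard case, it is a genuine counterexample.
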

\begin{proof}
First, 
 assume that player $1$ 
has
 optimal probabilities $p_{1,1}^*,\ldots,p_{1, {l_1-1}}^*$ over $\boldsymbol{o}_{1}, \ldots, \boldsymbol{o}_{l_1-1}$, respectively. Equation (22) can be rewritten by the following representation,
\begin{eqnarray*}
 \boldsymbol{o}_{l_1}=\Bigg(\bigg(\Big(\big((h_{11}  \boldsymbol{o}_{1}+h_{12}  \boldsymbol{o}_{2} ) h_{21}+h_{22}
 \boldsymbol{o}_{3} \big) h_{31}+h_{32}
  \boldsymbol{o}_{4} \Big)+\ldots\bigg) h_{l_1-2  ,1}+h_{l_1-2  ,2}  \boldsymbol{o}_{l_1-1} \Bigg),
\end{eqnarray*}
 where the following relationships 
  exist between the set of $h_{k,1}$s and $h_{k,2}$s
\begin{eqnarray*}
\begin{matrix}
h_{k,1}+h_{k,2}=1, & 1\leq k\leq l_1-2.\\
\end{matrix}
\end{eqnarray*}
Moreover, 
 we have the next equations between
$\{h_{{k},1}, h_{{k},2}\}$ and $\{\lambda_{k}\}$ for $1\leq k\leq
l_1-2$,
\begin{eqnarray}
\begin{matrix}
h_{{l_1-2},2}=\lambda_{l_1-1}, &  h_{{l_1-2},1}=1-\lambda_{l_1-1},\\
h_{l_1-3,2}(h_{{l_1-2},1} )=\lambda_{l_1-2}, &
h_{{l_1-3},1}=1-h_{{l_1-3},2}, \\
\ldots & \dots \\
h_{2,2} \prod^{l_{1}-2}_{i=3} h_{i,1} =\lambda_{3} , & h_{2,1}
 = 1 - h_{2,2}\\
 h_{1,2}\prod^{l_{1}-2}_{i=2} h_{i,1} =\lambda_{2} , & h_{1,1} \prod^{l_{1}-2}_{i=2} h_{i,1} =\lambda_{1}.
\end{matrix}
\end{eqnarray}
Afterwards, we introduce a game containing $l_1$ actions with vectors 
$\boldsymbol{o}_{1},\ldots, \boldsymbol{o}_{{l_1}-1}$ and $
\boldsymbol{o}'_{l_1}=h_{1,1} \boldsymbol{o}_{1}+h_{1,2}
\boldsymbol{o}_{2}$. Besides, 
 the optimal probabilities of the new game are assumed as $q_{1,1}^*,\ldots,q_{1,l_1-1}^*$
and $q_{1{l_1}}^{'*}$. The value of game to which the $l_1$-th action is added is not less than the game without the $l_1$-th action according to \cite{Fudenberg}, meaning that,
\begin{eqnarray}
\text{zerosum}( \boldsymbol{o}_{1},\ldots,
\boldsymbol{o}_{l_1-1})\leq \text{zerosum}(
\boldsymbol{o}_{1},\ldots, \boldsymbol{o}_{l_1-1},
\boldsymbol{o}'_{l_1}=h_{11} \boldsymbol{o}_{1}+h_{12}
\boldsymbol{o}_{2}).
\end{eqnarray}
In other words
, for the new game, we have the following results,
\begin{eqnarray*}
\text{zerosum}( \boldsymbol{o}_{1},\ldots, \boldsymbol{o}_{l_1-1})
\leq \text{zerosum}( \boldsymbol{o}_{1},\ldots,
\boldsymbol{o}_{l_1-1}, \boldsymbol{o}'_{l_1}=h_{11}
\boldsymbol{o}_{1}+h_{12}  \boldsymbol{o}_{2} )
\\=\min_{v}(q_{11}^* \boldsymbol{o}_{1}+\ldots+ q_{1,l_1-1}^*  \boldsymbol{o}_{l_1-1}+ q^{'*}_{1l_1}  \boldsymbol{o}^{'}_{l_1})\qquad\;\;\;\;\qquad\qquad\qquad
\\=\min_{v}((q_{11}^* +h_{11} q^{'*}_{1 l_1})  \boldsymbol{o}_{1}+
(q_{12}^{*}+h_{12} q^{'*}_{1l_1})  \boldsymbol{o}_{2}  +\ldots
+q_{1,l_1-1}^*  \boldsymbol{o}_{l_1-1}),
\end{eqnarray*}
where $\min_{v}$ finds the entry with the minimum value of vector $v$. 
 If both $h_{1,1}$ and $h_{1,2}$ are not less than zero, set $(q_{11}^{*}+h_{1,1} , q_{12}^{*}+h_{1,2}, \dots , q_{1,{l_1-1}}^{*})$ can be interpreted as a distribution vector over $l_1 - 1$ actions of player $1$. Notice that each probability distribution over these selected actions brings the value not greater 
  than ${v}$. Thus, we can conclude that
\begin{eqnarray}
\text{zerosum}( \boldsymbol{o}_{1},\ldots,
\boldsymbol{o}_{l_1-1})\geq \text{zerosum}( \boldsymbol{o}_{1},
\ldots,  \boldsymbol{o}_{l_1-1},
 \boldsymbol{o}^{'}_{l_1}=h_{11}  \boldsymbol{o}_{1}+h_{12}  \boldsymbol{o}_{2}).
\end{eqnarray}
Due to (24) and (25), we have
\begin{eqnarray}
\text{zerosum}( \boldsymbol{o}_1,\ldots, \boldsymbol{o}_{l_1-1})=
\text{zerosum}( \boldsymbol{o}_{1}, \ldots,  \boldsymbol{o}_{l_1-1},
 \boldsymbol{o}^{'}_{l_1}=h_{11}  \boldsymbol{o}_{1}+h_{12}  \boldsymbol{o}_{2}).
\end{eqnarray}
In other words, if the action $l_1$ with 
vector $ \boldsymbol{o}^{'}_{l_1}=h_{1,1}
 \boldsymbol{o}_{1}+h_{1,2}  \boldsymbol{o}_{2}$ is eliminated, we will gain the same value. However,
if one of them is less 
 than zero, we cannot get the above formulation.
  Without loss of generality, we assume that $h_{1,1}<0$ and $-\alpha=q_{11}^* +h_{1,1} q^{'*}_{1{l_1}}<0$. Remind that $h_{1,1}+ h_{1,2}=1$, thus $h_{1,2}>0$ and therefore $q_{1,2}^* + h_{1,2} q^{'*}_{1,{l_1}} > 0$
. Because the summation
over probabilities is 1, hence,
\begin{eqnarray*}
\sum_{i=1}^{l_1-1} q^{*}_{1,i} + q^{'*}_{1,l_1} = 1\nonumber\qquad\qquad\qquad\qquad\qquad\\
(q_{1,1}^{*}+ h_{1,1} q^{'*}_{1{l_1}})+ (q_{1,2}^{*} +h_{1,2} q^{'*}_{1{l_1}})+\ldots +q_{1,({l_1}-1)}^{*}=1,\nonumber\qquad\\
(q_{1,2}^* +h_{1,2} q^{'*}_{1,{l_1}})+\dots + q_{1,({l_1}-1)}^{*}
=1+\alpha.\qquad\qquad\qquad
\end{eqnarray*}
Now, consider distribution vector $[T_2, T_3, \ldots, T_{l_1-1}]$ which is constructed by the following,
\begin{eqnarray}
\begin{cases}
\frac{(q_{1,2}^{*}+h_{1,2} q^{'*}_{1,{l_1}})}{(1+\alpha)}=T_2\\
\frac{q^{*}_{1,3}}{(1+\alpha)}=T_3\\
\ldots\\
\frac{q^{*}_{1,({l_1}-1)}}{(1+\alpha)}=T_{{l_1}-1}
\end{cases}
\end{eqnarray}
where $T_2+T_3+\dots+T_{{l_1}-1}=1$.
Again, we have the following inequality:
\begin{eqnarray}
\min_{v}( \boldsymbol{o}_{2} T_2+\ldots + \boldsymbol{o}_{{l_1}-1}
T_{{l_1}-1})\leq \text{zerosum}( \boldsymbol{o}_{2},\ldots,
\boldsymbol{o}_{{l_1}-1}) \leq \text{zerosum}( \boldsymbol{o}_{1},
\boldsymbol{o}_{2},\dots, \boldsymbol{o}_{{l_1}-1}).
\end{eqnarray}
 To put it differently, (28) can be reformulated as
\begin{eqnarray*}
\min_{v} \bigg(\frac{ \boldsymbol{o}_{2} p_{12}^*+\dots+
\boldsymbol{o}_{{l_1}-1} p_{1({l_1}-1)}^{*}}{1-p^{*}_{11}}\bigg)
\leq \text{zerosum}( \boldsymbol{o}_{2},\ldots,
\boldsymbol{o}_{{l_1}-1}) \leq \text{zerosum}( \boldsymbol{o}_{1},
\boldsymbol{o}_{2},\dots, \boldsymbol{o}_{{l_1}-1}).
\end{eqnarray*}
Besides, (24) gives that
\begin{eqnarray*}
\min_{v} \bigg((1+\alpha)( \boldsymbol{o}_{2} T_2+\dots+
\boldsymbol{o}_{{l_1}-1} T_{{l_1}-1})-\alpha
\boldsymbol{o}_{1}\bigg) \geqslant \text{zerosum}(
\boldsymbol{o}_{1}, \boldsymbol{o}_{2},\ldots,
\boldsymbol{o}_{{l_1}-1}),
\end{eqnarray*}
and,
\begin{eqnarray*}
\min_{v} \Bigg(\bigg(\frac{ \boldsymbol{o}_{2} p_{12}^{*}+\dots+
\boldsymbol{o}_{{l_1}-1}
p_{1({l_1}-1)}^{*}}{1-p^{*}_{11}}\bigg)(1-p_{11}^{*})+p_{11}^{*}
 \boldsymbol{o}_{1}\Bigg) \geqslant \text{zerosum}( \boldsymbol{o}_{1},  \boldsymbol{o}_{2},\ldots, \boldsymbol{o}_{{l_1}-1}).
\end{eqnarray*}
If $b_k$, $c_k$ and $d_k$ are the $k$-th the entries of $( \boldsymbol{o}_{2} T_2+\dots+  \boldsymbol{o}_{l_1-1} T_{{l_1}-1})$,
$\bigg(\frac{ \boldsymbol{o}_{2} p_{12}^{*}+\dots+
\boldsymbol{o}_{{l_1}-1} p_{1({l_1}-1)}^{*}}{1-p^{*}_{11}}\bigg)$
and $ \boldsymbol{o}_{1}$, respectively, we could obtain the
following result,
\begin{eqnarray}
b_k>\frac{d_k}{\alpha+1}+\frac{\text{zerosum}( \boldsymbol{o}_{1}, \boldsymbol{o}_{2},\dots, \boldsymbol{o}_{{l_1}-1})}{\alpha+1}\geqslant \qquad\qquad\qquad\qquad\qquad\qquad\qquad\qquad\nonumber\\
\frac{\alpha(\text{zerosum}( \boldsymbol{o}_{1},
 \boldsymbol{o}_{2},\ldots, \boldsymbol{o}_{{l_1}-1})-p_{11}^{*} c_k)} {(\alpha+1)(1-p_{11}^{*})}
+\frac{\alpha(\text{zerosum}( \boldsymbol{o}_{1},
 \boldsymbol{o}_{2},\ldots, \boldsymbol{o}_{{l_1}-1}))} {\alpha+1}=
\nonumber\\\text{zerosum}( \boldsymbol{o}_{1},
\boldsymbol{o}_{2},\dots,
\boldsymbol{o}_{{l_1}-1})\bigg(\frac{\alpha}{(1+\alpha)(1-p_{11}^{*})}+\frac{1}{\alpha+1}\bigg)
-\frac{\alpha p_{11}^{*}) c_k}{(1+\alpha)(1-p_{11}^{*})}.
\end{eqnarray}
Consequently, 
\begin{eqnarray}
\text{zerosum}( \boldsymbol{o}_{1},\dots,
\boldsymbol{o}_{{l_1}-1})\bigg(\frac{\alpha}{(1+\alpha)(1-p_{11}^{*})}+\frac{1}{\alpha+1}\bigg)
< \frac{\alpha p_{11}^{*} c_k}{(1+\alpha)(1-p_{11}^{*})}.
\end{eqnarray}
This suggests the following inequality for $b_k$, $c_k$ and $d_k$
\begin{eqnarray}
w_1 c_k+w_2 b_k>d_k    \qquad st. \qquad w_1+w_2=1.
\end{eqnarray}
Therefore, consideration of both (30) and (31) gives us the following inequality,
\begin{eqnarray}
\text{zerosum}\big( \boldsymbol{o}_{1},  \boldsymbol{o}_{2}, \dots,  \boldsymbol{o}_{{l_1}-1}\big)<\nonumber\qquad\qquad\qquad\qquad\qquad\qquad\qquad\qquad\qquad\qquad\nonumber\\
 \min_{v} \Bigg(w_1 \bigg( \frac{ \boldsymbol{o}_{2} p_{12}^{*}+\dots + \boldsymbol{o}_{{l_1}-1} p_{1,{l_1}-1}^{*}}{1-p^{*}_{11}}\bigg)+
 w_2 ( \boldsymbol{o}_{2} T_2+\dots + \boldsymbol{o}_{{l_1}-1} T_{{l_1}-1})\Bigg)\nonumber\\
 = \min_{v} (\beta_2  \boldsymbol{o}_{2} + \dots+ \beta_{{l_1}-1}
 \boldsymbol{o}_{{l_1}-1})\qquad\qquad\qquad\qquad\qquad\qquad\qquad\qquad\,\,\,\,
\end{eqnarray}
in which $\beta_2+ \beta_3+\dots+\beta_{l_1-1}=1$ , $\beta_2$,
$\beta_3$ ,$\dots,\beta_{{l_1}-1}\geq 0$. We know that the 
minimum entry of vector $\beta_2  \boldsymbol{o}_{2} + \dots+ \beta_{{l_1}-1}
 \boldsymbol{o}_{{l_1}-1}$ is not higher than $\text{zerosum} ( \boldsymbol{o}_{1}, \boldsymbol{o}_{2},\dots, \boldsymbol{o}_{{l_1}-1})$ for any set of $\beta_2$,
$\beta_3$ ,$\dots,\beta_{{l_1}-1}$.
Therefore, our initial assumption is not
correct. In other words, $q_{1,1}^{*}+h_{1,1} q^{'*} h_{1,{l_1}}$ is
not less than zero, and we can obviate $
\boldsymbol{o}_{l_1}^{'}$.
 It means that
\begin{eqnarray}
\text{zerosum}( \boldsymbol{o}_{1},\dots, \boldsymbol{o}_{{l_1}-1})=
\text{zerosum}( \boldsymbol{o}_{1},\dots, \boldsymbol{o}_{{l_1}-1},
 \boldsymbol{o}^{'}_{l_1}=h_{1,1}  \boldsymbol{o}_{1}+h_{1,2}  \boldsymbol{o}_{2}).
\end{eqnarray}
Returning to the general case in (22), it can be concluded from (33) that
\begin{eqnarray*}
\text{zerosum}(\boldsymbol{o}_{1},\dots, \boldsymbol{o}_{{l_1}-1})=
\text{zerosum}( \boldsymbol{o}_{1},\dots, \boldsymbol{o}_{{l_1}-1},
\boldsymbol{o}^{'}_{l_1}=h_{1,1}  \boldsymbol{o}_{1} + h_{1,2}
\boldsymbol{o}_{2})=\qquad\,\,\,\qquad\qquad\,\,\,\,
\\\text{zerosum}(\boldsymbol{o}_{1}, \dots, \boldsymbol{o}_{{l_1}-1}, \boldsymbol{o}^{'}_{l_1}, h_{2,2} \boldsymbol{o}_{3} + h_{2,1} \boldsymbol{o}^{'}_{l_1} )
= \text{zerosum}( \boldsymbol{o}_{1},\dots,
\boldsymbol{o}_{{l_1}-1},h_{2,2}
\boldsymbol{o}_{3}+h_{2,1}\boldsymbol{o}^{'}_{l_1} )=\,\,\,\,\,
\\ \text{zerosum}\bigg( \boldsymbol{o}_{1},\dots, \boldsymbol{o}_{{l_1}-1},h_{22}  \boldsymbol{o}_{3}+h_{21}  \boldsymbol{o}^{'}_{l_1},
h_{3,1}(h_{2,2}  \boldsymbol{o}_{3}+ h_{2,1}  \boldsymbol{o}^{'}_{l_1}
)+ h_{3,2} \boldsymbol{o}_{4}\bigg)= \ldots =\qquad\qquad\,\,\,\,\,
\\\text{zerosum}\bigg( \boldsymbol{o}_{1},\dots, \boldsymbol{o}_{{l_1}-1}, h_{31}(h_{22}  \boldsymbol{o}_{3}+ h_{21}  \boldsymbol{o}^{'}_{l_1} )+ h_{32}  \boldsymbol{o}_{4}\bigg)=\,\,\,\,\,\,\qquad\qquad\qquad
\qquad\qquad\,\,\,\,\,\,\,\,\,\,\,\,\,\,\,\,\;\;\;
\\\text{zerosum}\Bigg(\bigg(\Big(\big((h_{11}
\boldsymbol{o}_{1}+h_{12} \boldsymbol{o}_{2}) h_{21} + h_{22}
\boldsymbol{o}_{3}\big) h_{31}
 +h_{32}  \boldsymbol{o}_{4} \Big)+\dots\bigg) h_{({l_1}-2)1}+ h_{({l_1}-2)2}
\boldsymbol{o}_{{l_1}-1}\Bigg)=\,
\end{eqnarray*}
\begin{eqnarray}
\text{zerosum} ( \boldsymbol{o}_{1},\dots, \boldsymbol{o}_{{l_1}-1},
\boldsymbol{o}_{l_1}).\;\;\;\,\,\,
\end{eqnarray}
\end{proof}
The expression in (34) states that if $
\boldsymbol{o}_{l_1}$ satisfies the conditions of \emph{Lemma} 1, we
can omit it.  Here, we return to prove \textit{Proposition $1$}.
Each $ \boldsymbol{o}_{i}$ has $l_2$ entries, so we can represent
all of the $l_1$ vectors by at most $l_2$ vectors of them. These
basic vectors are linear and independent. Without loss of
generality, we assume that these vectors are $
\boldsymbol{o}_{1},\dots, \boldsymbol{o}_{{l^{'}_2}},$ where $
{l^{'}_2}\leq {l_2}$. Based upon
the vector representation, all 
 $\{\boldsymbol{o}_{i}\}$s are classified into three groups.

Now, we assume that each $ \boldsymbol{o}_{i}$ can be displayed by
$\boldsymbol{o}_{i}=\lambda_{1,i}  \boldsymbol{o}_{1}+\dots
+\lambda_{{l^{'}_2},i}  \boldsymbol{o}_{{l^{'}_2}}$. Also, the
coefficients are unique due to 
 linearity and independency. These 
 groups are stated as follows:

\textbf{Group I: } If $\sum_{j=1}^{{l^{'}_2}} \lambda_{j,i}=1$, we
can obviate $ \boldsymbol{o}_{i}$ and get the same value as in
\emph{Lemma} 1.

\textbf{Group II:} If $\sum_{j=1}^{{l^{'}_2}} \lambda_{j,i}< 1$, we have
the following facts:

We assume that the optimal probability distributions over the $l_1$ actions
are $p_{1,1}^{*},\dots,p_{1,{l_1}}^{*}$.
\begin{eqnarray*}
\text{zerosum}( \boldsymbol{o}_{1},\dots, \boldsymbol{o}_{l_1})\geq
 \text{zerosum}( \boldsymbol{o}_{1},\dots, \boldsymbol{o}_{i-1},  \boldsymbol{o}_{i+1}, \dots , \boldsymbol{o}_{l_1})
\end{eqnarray*}
where the second term does not include $ \boldsymbol{o}_{i}$. From~\cite{non_game}, we
know that
\begin{eqnarray}
\text{zerosum}( \boldsymbol{o}_{1},\dots, \boldsymbol{o}_{{l_1}}) =
\min_{v}(p_{1,1}^{*}  \boldsymbol{o}_{1}+\dots+p_{1,{l_1}}^{*}
 \boldsymbol{o}_{l_1}).
\end{eqnarray}
Now, we extend (35) as
\begin{eqnarray*}
p_{1,1}^{*}  \boldsymbol{o}_{1}+\dots+p_{1,i}^{*}
 \boldsymbol{o}_{i}+\dots+p_{1,{l_1}}^{*}
 \boldsymbol{o}_{l_1}= p_{1,1}^{*}  \boldsymbol{o}_{1}+ \dots +p_{1,i}^{*} (\lambda_{1i}  \boldsymbol{o}_{1}+ \dots +\lambda_{l'_2 i}  \boldsymbol{o}_{l'_2})+
\dots +p_{1,{l_1}}^{*}  \boldsymbol{o}_{{l_1}}
\\< p_{1,1}^{*}  \boldsymbol{o}_{1} +\dots + p_{1,i}^{*} (\lambda_{1,i}  \boldsymbol{o}_{1}+ \dots+\lambda_{l'_2 i}  \boldsymbol{o}_{l'_2} )+\dots + p_{1,{l_1}}^{*}  \boldsymbol{o}_{l_1}
+\Bigg(1-\sum_{j=1}^{l'_2-1} \lambda_{ji} -\lambda_{l'_2 i}\Bigg)
\boldsymbol{o}_{l'_2}
\end{eqnarray*}
\begin{eqnarray}
\nonumber=p_{1,1}^{*}  \boldsymbol{o}_{1}+\dots +p_{1,i}^{*}
\bigg(\lambda_{1,i}  \boldsymbol{o}_{1}
+\Big(1-\sum_{j=1}^{l'_2-1}\lambda_{i,j}\Big)  \boldsymbol{o}_{l'_2}
\bigg)+\ldots+p_{1,{l_1}}^{*}  \boldsymbol{o}_{{l_1}}\,\qquad\,\:\,
\end{eqnarray}
in which we call the expression stated in parenthesis in the last
term as $ \boldsymbol{o}'_i$. The value of the game when playing
with $ \boldsymbol{o}'_i$ instead of $ \boldsymbol{o}_{i}$ is given
via
\begin{eqnarray}
\nonumber \text{zerosum}( \boldsymbol{o}_{1},\dots ,
\boldsymbol{o}^{'}_i,\dots,  \boldsymbol{o}_{l_1})\geq
\min(p_{11}^{*}  \boldsymbol{o}_{1}+ \dots+p_{1i}^{*}
\boldsymbol{o}^{'}_i+\dots+p_{1{l_1}}^{*}
 \boldsymbol{o}_{l_1})>\nonumber\\\min(p_{11}^{*}
 \boldsymbol{o}_{1}+\dots+p_{1{l_1}}^{*}
 \boldsymbol{o}_{l_1})=\text{zerosum}( \boldsymbol{o}_{1},\dots, \boldsymbol{o}_{l_1}).\;\;\;\qquad\nonumber
\end{eqnarray}
Now, 
 $ \boldsymbol{o}_{i}$ can be represented by basic vectors $ \{\boldsymbol{o}_{1},\dots ,
\boldsymbol{o}^{'}_i,\dots, \boldsymbol{o}_{l_1}\}$ in which the sum of coefficients becomes 
 $1$. Thus, we can obviate $ \boldsymbol{o}'_i$ and at the same time get the same
value. In other words, we have
\begin{eqnarray*}
\text{zerosum}( \boldsymbol{o}_{1},\dots, \boldsymbol{o}_{i-1},
\boldsymbol{o}^{'}_i,  \boldsymbol{o}_{i+1},\dots,
\boldsymbol{o}_{l_1})= \text{zerosum}( \boldsymbol{o}_{1},\dots,
\boldsymbol{o}_{i-1}, \boldsymbol{o}_{i+1},\dots,
\boldsymbol{o}_{l_1}).
\end{eqnarray*}
Moreover, we have
\begin{eqnarray*}
\text{zerosum}( \boldsymbol{o}_{1},\dots, \boldsymbol{o}_{l_1} )\geq
 \text{zerosum}( \boldsymbol{o}_{1},\dots, \boldsymbol{o}_{i-1}, \boldsymbol{o}_{i+1},\dots, \boldsymbol{o}_{l_1}).
\end{eqnarray*}
Then, we can remove all of vectors which have coefficients
satisfying the following inequality without loss in the value of the
game, $v$.
\begin{eqnarray}
\sum_{j=1}^{l'_2} \lambda_{ji}<1.
\end{eqnarray}
\textbf{Group III:}     If $\sum_{j=1}^{l'_2} \lambda_{ji}> 1$, we can
show $ \boldsymbol{o}_{i}$ by the following equation
\begin{eqnarray}
 \boldsymbol{o}_{i}=\lambda_{1,i}  \boldsymbol{o}_{1}+ \lambda_{2,i}  \boldsymbol{o}_{2}+\dots+\lambda_{{l_2}^{'},i}  \boldsymbol{o}_{l'_2-1}.
\end{eqnarray}
In this case, there exists at least one coefficient, e.g.,
$\lambda_{l'_2 , i}$, which is greater 
 than zero. Now, we try to show $\boldsymbol{o}_{l'_2}$ by $ \boldsymbol{o}_{1},\dots,
\boldsymbol{o}_{l'_2-1}$ including $ \boldsymbol{o}_{i}$. Indeed,
\begin{eqnarray*}
 \boldsymbol{o}_{l'_2}=\frac{-(\lambda_{1i}  \boldsymbol{o}_{1}+\lambda_{2i}
 \boldsymbol{o}_{2}+\dots+\lambda_{l'_2-1 i}  \boldsymbol{o}_{l'_2-1})}{\lambda_{l'_2 i}}
+\frac{1}{\lambda_{l'_2 i}}   \boldsymbol{o}_{i} =\mu_1
\boldsymbol{o}_{1}+\mu_2  \boldsymbol{o}_{2}+\dots+\mu_{l'_2-1}
 \boldsymbol{o}_{l'_2-1}.
\end{eqnarray*}
However, we know that
\begin{eqnarray}
\mu_1+\dots+\mu_{l'_2-1}=
\frac{-(\lambda_{1i}+\lambda_{2i}+\dots+\lambda_{{l_2}^{'}-1, i})+1
}{\lambda_{l'_2 i}}> 1.
\end{eqnarray}
Therefore, we can remove $ \boldsymbol{o}_{{l_2}^{'}}$ according to
the second group. As a result, 
 we only need the $l_2$ actions among the $l_1$ ones
  and get the same value. Similar classification can be applied to vectors of $
\boldsymbol{o}_{1},\dots, \boldsymbol{o}_{l_1}$ and at each stage
one vector is removed. Finally, $l'_2(l'_2\leqslant l_2)$ actions
(vectors) remain for playing the game.


%
\vspace{.3 cm}


\section{Proof of Proposition 2}
First, we consider an assumption for each allocation to continue our
proof.
\newtheorem{Ass}{Assumption}
\begin{Ass}
If the following relation exist between vectors of allocations,
$\{\boldsymbol{h}_1, \boldsymbol{h}_2,\ldots, \boldsymbol{h}_{M'},\\
\boldsymbol{h}_{M'+1}\}$, \vspace{-1 cm}
\begin{eqnarray}
\sum_{i=1}^{M'}\lambda_i\boldsymbol{h}_i=
\boldsymbol{h}_{M'+1},\vspace{-.6 cm}
\sum_{i=1}^{M'}\lambda_{i}=1,\qquad\qquad\qquad\,\,\,\\\vspace{-.4
cm}\ \text{for } {1 \leq i \leq M'+1}, 
EL\{\boldsymbol{h}_i\}\nsubseteq
EL\{\boldsymbol{h}_1,\dots,\boldsymbol{h}_{i-1},\boldsymbol{h}_{i+1},\dots,\boldsymbol{h}_{M'+1}\},\vspace{-.4
cm}\nonumber
\end{eqnarray}
then the occurrence probability of all relations is zero.
\end{Ass}

Indeed, each $\boldsymbol{h}_i$ has $M'$ entries. Accordingly, the
allocation vectors construct $M'$ equations, and we have $M'-1$
parameters involving $\lambda_1,\ldots,\lambda_{M'-1}$. For (39), we have $M'-1$ parameters satisfying $M'$ equations. 
This
situation makes hard to yield these $M'-1$ parameters out of $M'$
equations. For instance, if each $a_{ij,k}$ is independent with
respect to the other $a_{i'j',k'}$ with the uniform distribution,
this assumption is precise. Also, our simulation result certifies
the assumption.
 We assume that the attacker's strategy is the same as the strategy of
the original zero-sum game. In the original form, we have the 
 following equation \vspace{-.1 cm}
\begin{eqnarray*}
\boldsymbol{p}_1^{*T}\boldsymbol{U}\boldsymbol{p}_2^{*}=\max_{\boldsymbol{p}_1} \min_{\boldsymbol{p}_2} \boldsymbol{p}_1^{T}\boldsymbol{U}\boldsymbol{p}_2,\nonumber\\
\boldsymbol{U}\boldsymbol{p}_2^* =
\left(\begin{IEEEeqnarraybox*}[][c]{,c/c/c,}
{\boldsymbol{u}}_1\\
\ldots\\
{\boldsymbol{u}}_{\frac{N!}{(N-M')!}}
\end{IEEEeqnarraybox*}\right),\vspace{-.5 cm}
\end{eqnarray*}
in which $\boldsymbol{p^*}_1$ and $\boldsymbol{p^*}_2$ are the
optimal action probabilities of the coordinator and attacker,
respectively, and $\boldsymbol{U}$ is the payoff matrix in
accordance with the zero-sum game between the coordinator and the
attacker.

The related ${\boldsymbol{u}}_i$'s to the allocations with the
non-zero probabilities, which are named as the proper allocations, are
the same as the overall value of the game and
$\max({\boldsymbol{u}}_1,\ldots,{\boldsymbol{u}}_{\frac{N!}{(N-M')!}})$.
According to \textit{Proposition $1$}, we only need $M'$ proper
allocations, namely complete allocations, to obtain the similar value
when using all actions. Hence, we must show that each complete
allocation is surely selected as the solution of the PC-game at
least one action by means of contradiction. Notice that if more than
one proper allocations exist in the first-auction, only one of them
is randomly selected as the solution of auction. Furthermore, we
know that each allocation of channels can be found $M'^{(N-M')}$
times at the first auction. The worst case occurs for a complete
allocation, for example $\boldsymbol{J}$, when at least one of the
other proper allocations always exists in the first auction
including this allocation. For simplicity, we assume that
$\boldsymbol{J}$ is $(1,2,\ldots,M')$. Now, consider the following
first auctions: \vspace{-.2 cm}
\begin{eqnarray}
\left\{\begin{array}{ll} {1:} &\textrm{$(1,2,\ldots,M',1,\ldots,1)$}\\
{2:} &\textrm{$(1,2,\ldots,M',2,\ldots,2)$}\\
\ldots\\
{M':} &\textrm{$(1,2,\ldots,M',M',\ldots,M')$}\\
\end{array}\right.\vspace{-.8 cm}
\end{eqnarray}
where $(1,2,\ldots,M',j,\ldots,j)$ means that this first-auction
includes allocation $\boldsymbol{J}$, and the coordinator selects
channel $j$ for the remaining users as well. Hence, we have at least
$M'+1$ proper allocations among the above actions. These vectors
cannot be linearly independent since dimension of vectors is $M'$.
Therefore, we have the following according to
\textit{Proposition 1}.
\begin{eqnarray}
\boldsymbol{J}=\lambda_1 \boldsymbol{h}_{
\boldsymbol{o}_{1}}+\lambda_2 \boldsymbol{h}_{
\boldsymbol{o}_{2}}+\ldots+\lambda_{M'} \boldsymbol{h}_{
\boldsymbol{o}_{M'}} \qquad \lambda_1+\ldots+\lambda_{M'}=1,
\end{eqnarray}
where $\boldsymbol{h}_{ \boldsymbol{o}_{i}}$s are the proper
allocations. Also, any 
 two allocations 
  of $M'+1$ allocations
differ from two elements so that the conditions of
\textit{Assumption 1} are satisfied, and the probability of this occurrence is zero. Hence, our initial assumption about the concurrent existence of these allocations, is not correct and the PC-game is equal to the original game.
\vspace{-.25 cm}
\section{Proof of Proposition 3}
In the PD-game, payment for user $i$ has two parts, $p_{i,1}$ and
$p_{i,2}$, which are related to first and second auctions. If we
assume that the SUs choose preferences $l_1,\ldots,l_N$ as their
actions and channel $j$ is dedicated to the $i$-th SU while the
attacker jams channel $h$, we have the following formulation for the
average profit of SU $i$. \vspace{.5 cm} \vspace{-.7cm}
\begin{eqnarray}
\mathbb{E}\big(v_i-p_{i1}^t(\widehat{v}_i,v_{-i})-p_{i2}^t(\widehat{v}_i,v_{-i})\mid
l_1,\ldots,l_N
\big)\qquad\qquad\qquad\qquad\qquad\qquad\qquad\qquad\qquad\qquad \nonumber\\
=a_{ij,h}Q_{2,h}(t)+\sum_{h=1}^{M'}\Bigg(\Big[\sum_{k=1, k\neq i}^N
 z_{k l_{k}}^{t,opt} {{a'}_{k l_{k}}}(t)
-\max_{Z \mid a_{i j'}=0 \forall j'}\sum_{k=1, k\neq i}^N
 z_{k l_{k}}^{t} {a'}_{k l_{k}}(t)\Big]\Bigg)Q_{2,h}(t)\qquad\qquad\nonumber\\
+\sum_{h=1}^{M'}\Big[\sum_{k=1, k\neq i, k \in S_1}^N \sum_{j'=1,j'
\in S_2}^{M'} z_{k j'}^{t,opt} {a'}_{k j'}(t)- \max_{Z \mid a_{i
j'}=0 \forall j' }\sum_{k=1, k\neq i,k \in S_1}^N \sum_{j'=1,j' \in
S_2}^{M'} z_{k j'}^{t} {a'}_{k j'}(t)\Big]Q_{2,h}(t)).
\end{eqnarray}\vspace{0 cm}
In (42), $v_i$ and $\widehat{v}_i$  are the actual and submitted bid for SU $i$. Moreover, $Q_{2,h}$, $S_1$ and $S_2$ are
the probability for jamming of channel $h$ by the jammer, the set of SUs and
channels remained from first auction, respectively. To attain the
$i$-th user's profit, we should apply the probability of preferences
for all the SUs. Moreover, $\boldsymbol{p}_2(t)\approxeq Q_2(t)$,
therefore, $\sum_h Q_{2,h}(t)a_{ij,h}={{a'}_{ij}}$ similar to $(12)$.
Hence, the expectation profit can be stated as follows,
\begin{eqnarray}
\mathbb{E}\big(v_i-p_i^t(v_i,v_{-i})\big)=\qquad\qquad\qquad\qquad\qquad\qquad\qquad\qquad\qquad\qquad
\qquad\qquad\qquad\qquad\nonumber\\
\sum_{o_1=1}^{M'} Q_{l_1} \sum_{o_2=1}^{M'} \ldots
\sum_{o_N=1}^{M'}Q_{l_2} \ldots Q_{l_N} \Big[\sum_{k=1}^N z_{k
l_{k}}^{t,opt} {{a'}_{k l_{k}}}(t) -\max_{Z \mid a_{ij'}=0 \forall
j} \sum_{k=1, k\neq i}^N  z_{k l_{k}}^{t}
{a'}_{k l_{k}}(t)\Big] +\nonumber\\
\Bigg[\sum_{\substack{k=1, k\neq i,\\ k \in S_1}}^N
\sum_{\substack{j=1, j\neq j'\\j \in S_2}}^{M'} z_{kj}^{t,opt}
{a'}_{kj}(t)-\max_{Z \mid a_{ij}=0 \forall j}\sum_{\substack{k=1\\k
\in S_1}}^N \sum_{\substack{j=1,\\j \in S_2}}^{M'} z_{kj}^{t}
{a'}_{kj}(t)\Bigg].\qquad\qquad\qquad\qquad\qquad
\end{eqnarray}
Equivalently,
\begin{eqnarray}
\mathbb{E}\big(v_i-p_i^t(v_i,v_{-i})\big)=\qquad\qquad\qquad\qquad\qquad\qquad\qquad\qquad\qquad\qquad
\qquad\qquad\qquad\qquad\nonumber\\
\sum_{o_1=1}^{M'} Q_{l_1} \sum_{o_2=1}^{M'} \ldots \sum_{o_N=1}^{M'}
Q_{l_2} \ldots Q_{l_N} \Big[\underbrace{ \sum_{k=1}^N z_{k
l_{k}}^{t,opt} {{a'}_{k
l_{k}}}(t)}_{1}+\underbrace{\sum_{\substack{k=1, k\neq i\\ k \in
S_1}}^N \sum_{\substack{j=1, j\neq j'\\j \in S_2}}^{M'}
z_{kj}^{t,opt} {a'}_{kj}(t)}_{2}\Big]\nonumber\\
 - \Big[\underbrace{\max_{Z
\mid a_{ij'}=0 \forall j} \sum_{\substack{k=1\\ k\neq i}}^N z_{k
l_{k}}^{t} {a'}_{k l_{k}}(t)}_{3}+ \underbrace{\max_{Z \mid a_{ij}=0
\forall j}\sum_{\substack{k=1\\k \in S_1}}^N \sum_{\substack{j=1\\j
\in S_2}}^{M'} z_{kj}^{t}
{a'}_{kj}(t)}_{4}\Big].\qquad\qquad\qquad\qquad\,\,\,
\end{eqnarray}
 The third and fourth terms are not function of the $i$-th SU. Therefore, we can disregard them to further analysis. But, the summation over the first and second terms are equal to the total profit of the 
 SUs according to $(15)$. In other words, the individual profit is equivalent to the total profit. For this reason, the rational SUs must bid truthfully.
\vspace{-.5 cm}

%
%

\ifCLASSOPTIONcaptionsoff
  \newpage
\fi

\end{document}